\begin{document}

\allowdisplaybreaks

\renewcommand{\PaperNumber}{011}

\FirstPageHeading

\renewcommand{\thefootnote}{$\star$}

\ShortArticleName{Classical $R$-Operators and Integrable
Generalizations  of Thirring Equations}

\ArticleName{Classical $\boldsymbol{R}$-Operators and Integrable
Generalizations\\ of Thirring Equations\footnote{This paper is a
contribution to the Proceedings of the Seventh International
Conference ``Symmetry in Nonlinear Mathematical Physics'' (June
24--30, 2007, Kyiv, Ukraine). The full collection is available at
\href{http://www.emis.de/journals/SIGMA/symmetry2007.html}{http://www.emis.de/journals/SIGMA/symmetry2007.html}}}

\Author{Taras V. SKRYPNYK~$^{\dag\ddag}$}

\AuthorNameForHeading{T.V. Skrypnyk}

\Address{$^\dag$~SISSA, via Beirut 2-4, 34014 Trieste, Italy}

\EmailD{\href{mailto:skrypnyk@sissa.it}{skrypnyk@sissa.it}}

\Address{$^\dag$~Bogolyubov Institute for Theoretical Physics,
       14-b Metrologichna Str., Kyiv 03680, Ukraine}

\ArticleDates{Received October 31, 2007, in f\/inal form January
18, 2008; Published online February 01, 2008}

\Abstract{We construct dif\/ferent integrable  generalizations of
the massive Thirring equations corresponding  loop algebras
$\widetilde{\mathfrak{g}}^{\sigma}$ in dif\/ferent gradings and
associated ``triangular'' $R$-operators. We consider the most
interesting cases connected with the Coxeter automorphisms, second
order automorphisms and with ``Kostant--Adler--Symes''
$R$-operators. We recover a known matrix generalization of the
complex Thirring equations as a partial case of our construction.}

\Keywords{inf\/inite-dimensional Lie algebras; classical $R$-operators;
hierarchies of integrable equations}

\Classification{17B70; 17B80; 37K10; 37K30; 70H06}

\section{Introduction}
A theory of  hierarchies of integrable equations in partial
derivatives is based on the possibility to represent each of the
equations of the hierarchy in the so-called zero-curvature form:
\begin{equation}\label{zce}
\frac{\partial U(x,t,u)}{\partial t} - \frac{\partial
V(x,t,u)}{\partial x} + [U(x,t,u), V(x,t,u)]= 0,
\end{equation}
where $U(x,t,u)$, $V(x,t,u)$ are $\mathfrak{g}$-valued functions
with  dynamical variable coef\/f\/icients, $\mathfrak{g}$ is
simple (reductive) Lie algebra and $u$ is an additional complex
parameter usually called spectral. In order for the equation
(\ref{zce}) to be consistent it is necessary that $U(x,t,u)$ and
$V(x,t,u)$ belong to some closed inf\/inite-dimensional Lie
algebra $\widetilde{\mathfrak{g}}$ of  $\mathfrak{g}$-valued
functions of~$u$.

There are several approaches to  construction of zero-curvature
equations (\ref{zce}) starting from Lie algebras
$\widetilde{\mathfrak{g}}$. All of them are based on the so-called
Kostant--Adler--Symes scheme. One of the most simple and general
approaches is the approach of \cite{FNR,New} and \cite{Hol1,Hol2}
that interprets equation (\ref{zce}) as a consistency condition
for a two commuting Hamiltonian f\/lows written in the
Euler--Arnold or Lax form. In the framework of this approach
elements $U(x,t,u)$ and $V(x,t,u)$ coincide with the
algebra-valued gradients of the commuting Hamiltonians constructed
with the help of the
 Kostant--Adler--Symes scheme, the cornerstone of which is a
decomposition of the Lie algebra $\widetilde{\mathfrak{g}}$ (as a
vector space) in a direct sum of its two subalgebras:
\[
\widetilde{\mathfrak{g}}= \widetilde{\mathfrak{g}}_+ +
\widetilde{\mathfrak{g}}_- .
\] The
 algebra-valued gradients of the commuting Hamiltonians coincide with the
 restrictions of the algebra-valued gradients of Casimir
 functions of $\widetilde{\mathfrak{g}}$ onto the subalgebras
 $\widetilde{\mathfrak{g}}_{\pm}$.
Hence, such approach permits \cite{Skr8} to construct using Lie
algebra $\widetilde{\mathfrak{g}}$ the three types of integrable
equations: two types of equations with $U$-$V$ pair belonging to
the same Lie subalgebras $\widetilde{\mathfrak{g}}_{\pm}$ and the
third type of equations with $U$-operator belonging to
$\widetilde{\mathfrak{g}}_{+}$ and $V$-operator belonging to
$\widetilde{\mathfrak{g}}_{-}$ (or vise verse). The latter
equations are sometimes called ``negative f\/lows'' of integrable
hierarchies.

Nevertheless the approach of \cite{FNR,Hol1,Hol2} does not cover
all integrable equations. In particular, it was not possible to
produce by means of this approach integrable equations possessing
$U$-$V$ pairs with $U$-operator belonging to
$\widetilde{\mathfrak{g}}_{+}$ and $V$-operator belonging to
$\widetilde{\mathfrak{g}}_{-}$ in the case
$\widetilde{\mathfrak{g}}_{+}\cap \widetilde{\mathfrak{g}}_{-}\neq
0$, i.e. in  cases dropping out of the scope of
Kostant--Adler--Symes method. An example of such situation is the
Thirring integrable equations with the standard $sl(2)$-valued
$U$-$V$ pair \cite{Mikh}.

In the present paper we generalize the method of \cite{FNR,New} of
producing soliton equations and their $U$-$V$ pairs, f\/illing the
gap described above, i.e. making the method to include all known
soliton equations, and among them Thirring equation. We use the
same idea as in  \cite{FNR} and \cite{Hol1}, i.e.\ we interpret
zero-curvature condition as a compatibility condition for a set of
commuting Hamiltonian f\/lows on $\widetilde{\mathfrak{g}}^*$ but
constructed not with the help of Kostant--Adler--Symes method but
with the help of its generalization -- method of the classical
$R$-operator, where $R:\widetilde{\mathfrak{g}} \rightarrow
\widetilde{\mathfrak{g}}$,
 satisf\/ies a modif\/ied Yang--Baxter equation~\cite{ST}. In this case one
can also def\/ine \cite{ST} two Lie
subalgebras~$\widetilde{\mathfrak{g}}_{R_{\pm}}$ such that
$\widetilde{\mathfrak{g}}_{R_{\pm}}+\widetilde{\mathfrak{g}}_{R_{-}}=\widetilde{\mathfrak{g}}$,
but in this case this sum is  not  a direct sum of two vector
spaces, i.e.\
$\widetilde{\mathfrak{g}}_{R_{\pm}}\cap\widetilde{\mathfrak{g}}_{R_{-}}\neq
0$.

Hence, in order to achieve our goal it is necessary to construct
with the help of the $R$-operator an algebra of mutually commuting
functions. Contrary to the classical approach of~\cite{ST} they
should commute not with respect to the $R$-bracket $\{\; ,\; \}_R$
but with respect to the initial Lie--Poisson bracket $\{\; ,\; \}$
on $\widetilde{\mathfrak{g}}^*$. In our previous paper \cite{Skr9}
the corresponding functions were constructed using the ring of
Casimir functions $I^G(\tilde{\mathfrak{g}}^*)$. In more detail,
we proved that the functions $ I^{R_{\pm}}_k(L)\equiv I_k((R^* \pm
1 )(L))$,
 where
$I_k(L), I_l(L)\in I^G(\widetilde{\mathfrak{g}}^*)$,  constitute
an Abelian subalgebra in $C^{\infty}(\widetilde{\mathfrak{g}}^*)$
with respect to the standard Lie--Poisson brackets $\{\;,\;\}$ on
$\widetilde{\mathfrak{g}}^*$. The algebra-valued gradients of
functions the $I^{R_{\pm}}_k(L)$ belong to the subalgebras
$\widetilde{\mathfrak{g}}_{R_{\pm}}$ correspondingly.

In the case when the $R$-operator is of Kostant--Adler--Symes
type, i.e.~$R=P_+-P_-$, where $P_{\pm}$ are projection operators
onto subalgebras
$\widetilde{\mathfrak{g}}_{R_{\pm}}=\widetilde{\mathfrak{g}}_{\pm}$
and
$\widetilde{\mathfrak{g}}_{+}\cap\widetilde{\mathfrak{g}}_{-}=0$
we re-obtain the results of~\cite{New} (see also~\cite{Skr8}) as a
partial case of our construction. In the cases of more complicated
$R$-operators our scheme is new and generalizes the approach
of~\cite{New}. In particular, the important class of $U$-$V$ pairs
satisfying zero-curvature equations that can be obtained by our
method are connected with the so-called ``triangular''
$R$-operators. In more detail, if  $\widetilde{\mathfrak{g}}$
possess a ``triangular'' decomposition: $
\widetilde{\mathfrak{g}}=\widetilde{\mathfrak{g}}_+ +
\mathfrak{g}_0 + \widetilde{\mathfrak{g}}_- $, where the sum is a
direct sum of vector spaces, $\widetilde{\mathfrak{g}}_{\pm}$ and
$\mathfrak{g}_0$ are closed Lie subalgebras and
$\widetilde{\mathfrak{g}}_{\pm}$ are $\mathfrak{g}_0$-modules,
  $R_0$ is a solution of the modif\/ied
Yang--Baxter equation  on $\mathfrak{g}_0$, $P_{\pm}$ are the
projection operators onto the subalgebras
$\widetilde{\mathfrak{g}}_{\pm}$ then $ R=P_+ + R_0 -P_- $
 is a solution of the modif\/ied Yang--Baxter equation on
$\widetilde{\mathfrak{g}}$ (see \cite{Skr9} for the detailed
proof). The Lie subalgebras $\widetilde{\mathfrak{g}}_{R_{\pm}}$
have in this case the following form:
$\widetilde{\mathfrak{g}}_{R_{\pm}}=\widetilde{\mathfrak{g}}_{\pm}+
\mathfrak{g}_0$, i.e. $\widetilde{\mathfrak{g}}_{R_{+}}\cap
\widetilde{\mathfrak{g}}_{R_{-}} \neq 0$. Such $R$-operators are
connected with the Thirring-type integrable models. These
$R$-operators were f\/irst considered in \cite{Guil} and in
\cite{GM}, where the usual $sl(2)$-Thirring equation was obtained
using ``geometric'' technique.

In order to obtain the Thirring integrable equation and its
various generalizations in the framework of our algebraic approach
we consider the case when $\widetilde{\mathfrak{g}}$ coincides
with a loop algebra $\widetilde{\mathfrak{g}}= \mathfrak{g}\otimes
{\rm Pol}(u^{-1},u)$ or its ``twisted'' with the help of
f\/inite-order automorphism $\sigma$
subalgeb\-ra~$\widetilde{\mathfrak{g}}^{\sigma}$. The algebras
$\widetilde{\mathfrak{g}}^{\sigma}$ possess the natural
``triangular'' decomposition with the algebra $\mathfrak{g}_0$
being a~reductive subalgebra of $\mathfrak{g}$ stable under the
action of $\sigma$. For each algebra
$\widetilde{\mathfrak{g}}^{\sigma}$ with a natural triangular
decomposition and for each classical $R$ operator $R_0$ on
$\mathfrak{g}_0$ we def\/ine an integrable equation of the
hyperbolic type which we call the ``non-Abelian generalized
Thirring equation''. We show that in the case
$\mathfrak{g}=sl(2)$, $\sigma^2=1$ and
$\mathfrak{g}_0=\mathfrak{h}$, where $\mathfrak{h}$ is a Cartan
subalgebra of~$sl(2)$ our construction yields the usual Thirring
equation and its standard $sl(2)$-valued $U$-$V$ pair. We consider
in some detail the cases of the generalized Thirring equations
that correspond to the second order automorphism of
$\mathfrak{g}$. For the case of such automorphisms and
$\mathfrak{g}=gl(n)$ and special choice of $R_0$ we obtain the
so-called matrix generalization of complex Thirring equations
obtained by other method in~\cite{TW}. After a reduction to  real
subalgebra $u(n)$ these equations read as follows:
\begin{gather*}
i\partial_{x_-} \Psi_+=\left(\frac{1}{\kappa_-}\Psi_+
(\Psi^{\dag}_{-}\Psi_{-}) +\kappa_+\Psi_{-}\right), \qquad
i\partial_{x_+}
\Psi_-=-\left(\frac{1}{\kappa_+}(\Psi_{+}\Psi^{\dag}_{+})\Psi_- +
\kappa_-\Psi_+\right),
\end{gather*}
where $\Psi_{\pm} \in {\rm Mat}(p,q)$, $\kappa_{\pm}\in
\mathbb{R}$ are constants, i.e.\ are exact matrix analogues of the
usual massive Thirring equations.

We also consider in detail the case of the generalized Thirring
equations that correspond to the Coxeter automorphisms
of~$\mathfrak{g}$.
 We call the generalized Thirring equations
corresponding to the the Coxeter automorphisms the ``generalized
Abelian Thirring equations''. We show that the number of
independent f\/ields in the generalized Abelian Thirring equation
corresponding to Coxeter automorphism is equal to  $2
(\mathrm{dim}\, \mathfrak{g}- \mathrm{rank}\, \mathfrak{g})$ and
the order of the equations grows with the growth of the rank of
Lie algebra $\mathfrak{g}$. We consider in detail the generalized
Abelian Thirring equations corresponding to the case
$\mathfrak{g}=sl(3)$.

For the sake of completeness we also consider  non-linear
dif\/ferential equations of hyperbolic type corresponding to
$\widetilde{\mathfrak{g}}^{\sigma}$ and the Kostant--Adler--Symes
$R$-operator. We show that the obtained equations are in a sence
intermediate between the generalized Thirring and non-Abelian Toda
equations. The cases of the second order  and Coxeter
automorphisms are considered. The $sl(2)$ and $sl(3)$ examples are
worked out in detail.

The structure of the present paper is the following: in the second
section we describe commutative subalgebras  associated with the
classical $R$ operator. In the third section we obtain associated
zero-curvature equations. At last in the fourth section we
consider integrable hierarchies associated with loop algebras in
dif\/ferent gradings and dif\/ferent $R$-operators.

\section[Commutative subalgebras  and classical $R$ operator]{Commutative subalgebras  and classical $\boldsymbol{R}$ operator}

Let $\widetilde{\mathfrak{g}}$ be a Lie algebra (f\/inite or
inf\/inite-dimensional) with a Lie bracket $[\ ,\ ]$. Let
$R:\widetilde{\mathfrak{g}}\rightarrow \widetilde{\mathfrak{g}}$
be some linear operator on $ \widetilde{\mathfrak{g}}$. The
operator $R$ is called the classical $R$-operator if it
satisf\/ies modif\/ied Yang--Baxter equation:
\begin{equation}\label{mybe}
R([R(X),Y]+ [X,R(Y)])-[R(X),R(Y)]=[X,Y]\qquad  \forall\, X,Y \in
\widetilde{\mathfrak{g}}.
\end{equation}
We will  use, in addition  to the operator $R$, the following
operators: $R_{\pm}\equiv R\pm 1$.  As it is known~\cite{BD,ST}
the maps $R_{\pm}$ def\/ine Lie subalgebras
$\widetilde{\mathfrak{g}}_{R_{\pm}}\subset
\widetilde{\mathfrak{g}}$:
$\widetilde{\mathfrak{g}}_{R_{\pm}}=\mathrm{Im}\, R_{\pm}$. It is
easy to see from their def\/inition
$\widetilde{\mathfrak{g}}_{R_{+}}+\widetilde{\mathfrak{g}}_{R_{-}}=\widetilde{\mathfrak{g}}$,
but, in general, this sum is  not a direct sum of vector spaces,
i.e.,
$\widetilde{\mathfrak{g}}_{R_{+}}\cap\widetilde{\mathfrak{g}}_{R_{-}}\neq
0$.

Let $\widetilde{\mathfrak{g}}^*$  be the dual space to
$\widetilde{\mathfrak{g}}$ and
 $\langle \ , \ \rangle:
\widetilde{\mathfrak{g}}^*\times
\widetilde{\mathfrak{g}}\rightarrow \mathbb{C}$ be a pairing
between $\widetilde{\mathfrak{g}}^*$ and
$\widetilde{\mathfrak{g}}$. Let~$\{X_i\}$ be a basis in the Lie
algebra $\widetilde{\mathfrak{g}}$, $\{X^*_i\}$ be a basis in the
dual space $\widetilde{\mathfrak{g}}^*$: $\langle X^*_j, X_i
\rangle =\delta_{ij}$, $L=\sum\limits_{i} L_iX^*_i \in
\widetilde{\mathfrak{g}}^*$ be the generic element of
$\widetilde{\mathfrak{g}}^*$, $L_i$ be the coordinate functions on
$\widetilde{\mathfrak{g}}^*$.

Let us consider the standard Lie--Poisson bracket between $F_1,
F_2 \in C^{\infty}(\widetilde{\mathfrak{g}}^*)$:
\begin{equation*}\label{lpb}
\{F_1(L),F_2(L)\}= \langle L , [\nabla F_1, \nabla F_2] \rangle,
\end{equation*}
where
\begin{equation*}
\nabla F_k(L)= \sum\limits_{i} \frac{\partial F_k(L)}{\partial
L_i}X_i
\end{equation*}
is a so-called algebra-valued gradient of $F_k$. The summations
hereafter are implied over all basic elements
of~$\widetilde{\mathfrak{g}}$.

 Let $R^*$ be the operator dual to $R$, acting in the space
$\widetilde{\mathfrak{g}}^*$: $ \langle R^{*}(Y),X\rangle\equiv
\langle Y,R(X)\rangle$, $\forall\, Y \in
\widetilde{\mathfrak{g}}^*$, $X \in \widetilde{\mathfrak{g}}$. Let
$I^{G}(\widetilde{\mathfrak{g}}^*)$ be the ring of invariants of
the coadjoint representation of $\widetilde{\mathfrak{g}}$.

We will consider the functions $I^{R_{\pm}}_k(L)$ on
$\widetilde{\mathfrak{g}}^*$ def\/ined by the following formulas:
\begin{equation*}
I^{R_{\pm}}_k(L)\equiv  I_k((R^* \pm 1 )(L))\equiv I_k(R^*_{\pm}
(L)).
\end{equation*}

The following theorem holds true \cite{Skr9}:
\begin{theorem}\label{mainteo}
Functions $\{I^{R_{+}}_k(L)\}$ and $\{I^{R_{-}}_l(L) \}$, where
$I_k(L), I_l(L)\in I^G(\widetilde{\mathfrak{g}}^*)$, generate an
Abelian subalgebra in $C^{\infty}(\widetilde{\mathfrak{g}}^*)$
with respect to the standard Lie--Poisson brackets $\{\; ,\; \}$
on $\widetilde{\mathfrak{g}}^*$:
\begin{gather*}
(i) \  \{I^{R_{+}}_k(L),I^{R_{+}}_l(L)\}=0, \quad (ii) \
\{I^{R_{-}}_k(L),I^{R_{-}}_l(L)\}=0, \quad (iii) \
\{I^{R_{+}}_k(L),I^{R_{-}}_l(L)\}=0.
\end{gather*}
\end{theorem}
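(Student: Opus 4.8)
The plan is to reduce everything to two ingredients: the chain rule for the algebra-valued gradient of a function composed with the linear map $R^*_{\pm}$, and the coadjoint-invariance of the Casimir functions $I_k$. First I would compute the gradient of $I^{R_{\pm}}_k$. Since $R^*_{\pm}$ is linear and $I^{R_{\pm}}_k(L)=I_k(R^*_{\pm}(L))$, a direct chain-rule calculation based on the pairing and the duality $\langle R^*(Y),X\rangle=\langle Y,R(X)\rangle$ yields
\begin{equation*}
\nabla I^{R_{\pm}}_k(L)=R_{\pm}\bigl(\nabla I_k(R^*_{\pm}(L))\bigr),
\end{equation*}
where the inner gradient is evaluated at the shifted point $M_{\pm}:=R^*_{\pm}(L)$. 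This is the only place where the definition of the dual operator enters.

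Next I would abbreviate $A:=\nabla I_k(M_{\epsilon})$ and $B:=\nabla I_l(M_{\delta})$ with $\epsilon,\delta\in\{+1,-1\}$, so that in all three cases the quantity to be computed is $\langle L,[R_{\epsilon}A,R_{\delta}B]\rangle$. Expanding $R_{\epsilon}=R+\epsilon$ and $R_{\delta}=R+\delta$, substituting the modified Yang--Baxter equation~(\ref{mybe}) for the term $[R(A),R(B)]$, and regrouping, I expect the identity
\begin{equation*}
[R_{\epsilon}A,R_{\delta}B]=R_{\delta}[R(A),B]+R_{\epsilon}[A,R(B)]+(\epsilon\delta-1)[A,B].
\end{equation*}
The key input is then the coadjoint-invariance of the Casimirs: because $A=\nabla I_k(M_{\epsilon})$ lies in the kernel of $\operatorname{ad}^*_{M_{\epsilon}}$, one has $\langle M_{\epsilon},[A,Z]\rangle=0$ for every $Z$, which in terms of $L$ reads $\langle L,R_{\epsilon}[A,Z]\rangle=0$; symmetrically $\langle L,R_{\delta}[B,W]\rangle=0$ for all $W$. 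Pairing the displayed identity with $L$, these two relations (taken with $Z=R(B)$ and $W=R(A)$) annihilate the first two terms. This settles cases~(i) and~(ii) at once, since there $\epsilon=\delta$ forces the coefficient $\epsilon\delta-1$ to vanish.

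The main obstacle is case~(iii): here $\epsilon\delta-1=-2$, so a residual term $-2\langle L,[A,B]\rangle$ survives and must be shown to vanish on its own. The resolution I anticipate is the observation that $M_+-M_-=(R^*+1)(L)-(R^*-1)(L)=2L$, whence
\begin{equation*}
\langle L,[A,B]\rangle=\frac{1}{2}\langle M_+,[A,B]\rangle-\frac{1}{2}\langle M_-,[A,B]\rangle .
\end{equation*}
Invariance of $I_k$ at $M_+$ kills the first pairing (with test element $B$), and invariance of $I_l$ at $M_-$ kills the second (with test element $A$), so the residual term vanishes and all three brackets are zero. Thus the only genuinely delicate point is that the opposite-sign case does not close under the two invariance relations used for the first two terms alone; it requires the supplementary identity $2L=M_+-M_-$ together with invariance evaluated simultaneously at both shifted points.
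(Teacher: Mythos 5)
Your argument is correct, and I can confirm each step: the chain rule indeed gives $\nabla I^{R_{\pm}}_k(L)=R_{\pm}\bigl(\nabla I_k(R^*_{\pm}L)\bigr)$ with your sign conventions; the identity
\begin{equation*}
[R_{\epsilon}A,R_{\delta}B]=R_{\delta}[R(A),B]+R_{\epsilon}[A,R(B)]+(\epsilon\delta-1)[A,B]
\end{equation*}
follows from expanding $[(R+\epsilon)A,(R+\delta)B]$ and substituting the modified Yang--Baxter equation~(\ref{mybe}) for $[R(A),R(B)]$; and the invariance relations $\langle M_{\epsilon},[A,\cdot\,]\rangle=0$, $\langle M_{\delta},[B,\cdot\,]\rangle=0$ kill the first two terms after pairing with $L$ and pulling $R_{\delta}$, $R_{\epsilon}$ across the pairing. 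Your treatment of the mixed case~(iii) via $2L=M_+-M_-$ is exactly the supplementary observation needed; without it the residual term $-2\langle L,[A,B]\rangle$ does not vanish for formal reasons, so you are right to flag it as the delicate point. One caveat on the comparison you were asked about: this paper does not actually contain a proof of Theorem~\ref{mainteo} --- it is quoted from~\cite{Skr9} --- so there is no in-text argument to measure yours against. That said, your route (gradient chain rule, the mYBE-based rearrangement of $[R_{\epsilon}A,R_{\delta}B]$, and coadjoint invariance of the Casimirs evaluated at the shifted points $R^*_{\pm}L$) is the natural and, as far as I can tell, the intended one for this family of statements; it is a complete and self-contained proof.
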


\begin{remark} Note that the commutative subalgebras constructed
in this theorem  dif\/fer from the commutative subalgebras
constructed using standard $R$-matrix scheme \cite{ST}. Indeed,
our theorem states commutativity of functions $\{I^{R_{+}}_k(L)\}$
and $\{I^{R_{-}}_l(L) \}$ with respect to the initial Lie--Poisson
bracket $\{\ ,\ \}$ on $\widetilde{\mathfrak{g}}$. The standard
$R$-matrix scheme states commutativity of the functions
$\{I_k(L)\}$ with respect to the so-called $R$-bracket $\{\ ,\
\}_R$, where:
\begin{equation*}
\{F_1(L),F_2(L)\}_R= \langle L , [R(\nabla F_1), \nabla F_2]+
[\nabla F_1, R(\nabla F_2)] \rangle.
\end{equation*}
\end{remark}

Theorem \ref{mainteo} provides us a large Abelian subalgebra in
the space $(C^{\infty}(\widetilde{\mathfrak{g}}^*),\{\; ,\; \})$.
We will consider the following two  examples of the $R$-operators
and the corresponding Abelian subalgebras.

\begin{example}
 Let us consider the  case of the Lie algebras
$\widetilde{\mathfrak{g}}$ with the so-called
``Kostant--Adler--Symes'' (KAS) decomposition into a direct sum of
the two vector subspaces $\widetilde{\mathfrak{g}}_{\pm}$:
\[
\widetilde{\mathfrak{g}}=\widetilde{\mathfrak{g}}_+ +
\widetilde{\mathfrak{g}}_-,
\]
where subspaces $\widetilde{\mathfrak{g}}_{\pm}$ are closed Lie
subalgebras. Let $P_{\pm}$ be the projection operators onto the
subalgebras $\widetilde{\mathfrak{g}}_{\pm}$ respectively. Then it
is known \cite{ST} that in this case it is possible to def\/ine
the so-called Kostant--Adler--Symes $R$-matrix:
\[
R=P_+ - P_-.
\]
 It is easy to see that  $R_+=1+R=2P_+$,
$R_-=R-1=-2P_-$, are proportional to the projection operators onto
the  subalgebras $\widetilde{\mathfrak{g}}_{\pm}$. It follows that
$\widetilde{\mathfrak{g}}_{R_{\pm}}\equiv
\widetilde{\mathfrak{g}}_{\pm}$ and
$\widetilde{\mathfrak{g}}_{R_{+}} \cap
\widetilde{\mathfrak{g}}_{R_{-}}=0$.

The Poisson commuting functions $I^{R_{\pm}}_k(L)$  acquire the
following simple form:
\begin{equation*}
I^{R_{\pm}}_k(L)\equiv  I_k^{\pm}(L)\equiv I_k(L_{\pm}), \qquad
\text{where}\quad L_{\pm} \equiv P^*_{\pm} L,
\end{equation*}
i.e.\ $I_k^{\pm}(L)$ are restrictions of the coadjoint invariants
 onto the dual spaces
$\widetilde{\mathfrak{g}}_{\pm}^*$.
\end{example}

\begin{example}  Let us consider the  case of  Lie algebras
$\widetilde{\mathfrak{g}}$ with the ``triangular'' decomposition:
\begin{equation*}
\widetilde{\mathfrak{g}}=\widetilde{\mathfrak{g}}_+ +
\mathfrak{g}_0 + \widetilde{\mathfrak{g}}_-,
\end{equation*}
where the sum is a direct sum of vector spaces,
$\widetilde{\mathfrak{g}}_{\pm}$ and $\mathfrak{g}_0$ are closed
subalgebras, and $\widetilde{\mathfrak{g}}_{\pm}$ are
$\mathfrak{g}_0$-modules. As it is known \cite{Guil} (see also
\cite{Skr9} for the detailed proof),
 if $R_0$ is a solution of the modif\/ied
Yang--Baxter equation (\ref{mybe}) on $\mathfrak{g}_0$ then
\begin{equation}\label{trm}
R=P_+ + R_0 -P_-
\end{equation}
is a solution of the modif\/ied Yang--Baxter (\ref{mybe}) equation
on $\mathfrak{g}$ if $P_{\pm}$ are the projection operators onto
the subalgebras $\widetilde{\mathfrak{g}}_{\pm}$ .

 In the case when $R_0=\pm Id_0$ (which are
obviously the solutions of the equation (\ref{mybe}) on
$\mathfrak{g}_0$) we obtain that $R$-matrix (\ref{trm}) passes to
the standard Kostant--Adler--Symes $R$-matrix.  Nevertheless in
the considered ``triangular cases'' there are other possibilities.
For example, if  a Lie subalgebra $\mathfrak{g}_0$ is Abelian then
(\ref{trm}) is a solution of (\ref{mybe}) for any tensor $R_0$ on
$\mathfrak{g}_0$.

 In the case of the $R$-matrix (\ref{trm}) we have:
$R_{+}= 2P_+ +(P_0 + R_0)$, $R_{-}= -(2P_- +(P_0 - R_0))$,
$\widetilde{\mathfrak{g}}_{R_{\pm}}=\widetilde{\mathfrak{g}}_{\pm}
+ {\rm Im}\, (R_0)_{\pm}$, where $(R_0)_{\pm}=(R_0 \pm P_0)$ are
the $R_{\pm}$-operators on $\mathfrak{g}_0$ and
$\widetilde{\mathfrak{g}}_{R_{+}}\cap
\widetilde{\mathfrak{g}}_{R_{-}}={\rm Im}\, (R_0)_{+}\cap {\rm
Im}\, (R_0)_{-}$. The Poisson-commutative functions
 $I^{R_{\pm}}_k(L)$ acquire  the following form:
\begin{equation*}
I^{R_{\pm}}_k(L)\equiv I^{R_{0,\pm}}_k(L)= I_k\left(L_{\pm}
+\frac{(1 \pm R^*_0)}{2}(L_0)\right),
\end{equation*}
where $L_{\pm} \equiv P^*_{\pm} L$, $L_{0} \equiv P^*_{0} L$. We
will use such functions constructing dif\/ferent generalizations
of the Thirring model.
\end{example}

The  Theorem \ref{mainteo} gives us a  set of mutually commutative
functions on $\widetilde{\mathfrak{g}}^*$ with respect to the
brackets $\{\; ,\; \}$ that can be used as an algebra of the
integrals of some Hamiltonian system
on~$\widetilde{\mathfrak{g}}^*$. For the Hamiltonian function one
may chose one of the functions $I_k^{R_{\pm}}(L)$ or their linear
combination. Let us consider the corresponding Hamiltonian
equation:
\begin{equation}\label{hameq}
\frac{d L_i}{d t^{\pm}_k}=\{L_i,I_k^{R_{\pm}}(L)\}.
\end{equation}
The following proposition is true \cite{Skr9}:

\begin{proposition}
The Hamiltonian equations of motion \eqref{hameq} can be written
in the  Euler--Arnold (generalized Lax) form:
\begin{equation}\label{glax}
\frac{d L}{d t^{\pm}_k}={\rm ad}^*_{V_k^{\pm}} L,
\end{equation}
where $V_k^{\pm} \equiv \nabla I_k^{R_{\pm}}(L)$.
\end{proposition}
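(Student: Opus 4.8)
The plan is to compute the right-hand side of the Hamiltonian equation \eqref{hameq} directly from the definition of the Lie--Poisson bracket and then to recognize the result as a coadjoint action. First I would record the one fact specific to the linear coordinate functions: since $L_i=\langle L, X_i\rangle$, we have $\partial L_i/\partial L_j=\delta_{ij}$, and therefore the algebra-valued gradient of a coordinate function is simply a basis vector, $\nabla L_i = X_i$. This is the only input needed beyond the definitions.

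Next I would substitute $F_1=L_i$ and $F_2=I_k^{R_{\pm}}$ into the Lie--Poisson bracket, which gives
\[
\{L_i, I_k^{R_{\pm}}(L)\}=\langle L, [\nabla L_i, \nabla I_k^{R_{\pm}}]\rangle=\langle L, [X_i, V_k^{\pm}]\rangle ,
\]
using the abbreviation $V_k^{\pm}=\nabla I_k^{R_{\pm}}(L)$ from the statement. Reassembling the component equations through $L=\sum_i L_i X_i^*$, the flow $dL/dt_k^{\pm}=\sum_i\{L_i,I_k^{R_{\pm}}\}X_i^*$ is characterized by its pairing against an arbitrary basis vector $X_j$, namely $\langle dL/dt_k^{\pm}, X_j\rangle = \langle L,[X_j,V_k^{\pm}]\rangle$ for all $j$.

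Finally I would invoke the definition of the coadjoint representation. With the convention $\langle {\rm ad}^*_Y L, X\rangle = \langle L, [X, Y]\rangle$, the right-hand side above is precisely $\langle {\rm ad}^*_{V_k^{\pm}} L, X_j\rangle$ for every $j$, so the two covectors in $\widetilde{\mathfrak{g}}^*$ agree and the Euler--Arnold form \eqref{glax} follows. The same computation is valid verbatim in the infinite-dimensional loop-algebra setting, provided the pairing and the gradients are read in the appropriate formal sense.

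I do not expect a genuine obstacle here: the proposition is a one-line consequence of the definitions once $\nabla L_i = X_i$ is observed. The only point requiring care is fixing the sign convention for ${\rm ad}^*$ consistently with the chosen Lie--Poisson bracket, since the opposite convention would yield ${\rm ad}^*_{-V_k^{\pm}}$ together with a compensating sign in \eqref{hameq}; this is a matter of bookkeeping rather than mathematical substance.
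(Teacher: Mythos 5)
Your proof is correct and is the standard derivation: $\nabla L_i = X_i$, hence $\{L_i, I_k^{R_{\pm}}\}=\langle L,[X_i,V_k^{\pm}]\rangle=\langle \mathrm{ad}^*_{V_k^{\pm}}L, X_i\rangle$, which is exactly the Euler--Arnold form. The paper itself omits the proof and defers it to the reference [Skr9], so there is nothing to compare against, but your argument (including the remark on the $\mathrm{ad}^*$ sign convention and the formal reading of the pairing in the infinite-dimensional case) is the expected one.
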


\begin{remark} In the case when  $\widetilde{\mathfrak{g}}^*$ can
be identif\/ied with $\widetilde{\mathfrak{g}}$ and a coadjoint
representation can be identif\/ied with an adjoint one,  equation
(\ref{glax}) may be written in the usual Lax (commutator) form.
\end{remark}

In the present paper we will not consider f\/inite-dimensional
Hamiltonian systems that could be obtained in the framework of our
construction but will use equation (\ref{glax}) in order to
generate hierarchies of soliton equations in $1+1$ dimensions.

\section[Integrable hierarchies and classical $R$-operators]{Integrable hierarchies and classical $\boldsymbol{R}$-operators}

 Let us remind one of the main Lie algebraic
approaches to the theory of soliton equations \cite{New}. It is
based on the  zero-curvature condition and its interpretation as a
consistency condition of two commuting Lax f\/lows. The following
Proposition is true:
\begin{proposition}\label{mzc}
Let $H_1, H_2 \in C^{\infty}(\widetilde{\mathfrak{g}}^*)$ be two
Poisson commuting functions: $\{H_1, H_2\}=0$, where $\{\; ,\;
\}$ is a standard Lie--Poisson brackets on
$\widetilde{\mathfrak{g}}^*$. Then their
$\widetilde{\mathfrak{g}}$-valued gradients satisfy the
``modified'' zero-curvature equation:
\begin{equation}\label{modzce}
\frac{\partial \nabla H_1}{\partial t_2} - \frac{\partial \nabla
H_2}{\partial t_1} + [\nabla H_1, \nabla H_2]= k \nabla I,
\end{equation}
where $I$ is a Casimir function and $t_i$ are parameters along the
trajectories of Hamiltonian equations corresponding to the
Hamiltonians $H_i$ and $k$ is an arbitrary constant.
\end{proposition}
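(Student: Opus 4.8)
The plan is to compute the mixed time–space derivatives of the algebra-valued gradients directly from the Euler–Arnold (Lax) form of the two Hamiltonian flows and then combine them. First I would set $V_i \equiv \nabla H_i$ and regard $H_1,H_2$ as generating two commuting flows with flow parameters $t_1,t_2$. By the preceding Proposition, each flow can be written in the generalized Lax form $\partial L/\partial t_i = \mathrm{ad}^*_{V_i} L$. The essential point is to translate this evolution of the coordinate functions $L_j$ into an evolution of the gradients $V_i$ themselves, which I would do using the chain rule: since $V_i$ depends on $L$ through the coordinate functions $L_j$, the derivative $\partial V_i/\partial t_j$ is obtained by contracting the Hessian of $H_i$ against $\partial L/\partial t_j$.

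Next I would unwind the Poisson-commutativity hypothesis $\{H_1,H_2\}=0$. Writing this out as $\langle L,[\nabla H_1,\nabla H_2]\rangle=0$ identically on $\widetilde{\mathfrak{g}}^*$, I would differentiate this identity with respect to the coordinate $L_j$. This produces a relation among the Hessians of $H_1$ and $H_2$ and the structure constants, which is precisely the algebraic input needed to identify the combination
\[
\frac{\partial V_1}{\partial t_2} - \frac{\partial V_2}{\partial t_1} + [V_1,V_2]
\]
as an algebra-valued gradient. The computation is a bookkeeping exercise in the structure constants $c_{jk}^{\,l}$ of $\widetilde{\mathfrak{g}}$ together with the symmetry of the Hessian $\partial^2 H_i/\partial L_j \partial L_k$; I expect the three terms to collect, after substituting the two Lax equations, into the gradient of a function whose Poisson bracket with every coordinate vanishes — that is, into $k\,\nabla I$ for a Casimir $I$ and a constant $k$.

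The main obstacle will be the final identification step: showing that the resulting expression is genuinely the gradient of a Casimir function rather than merely a closed one-form on $\widetilde{\mathfrak{g}}^*$. Concretely, after the Hessians cancel I would be left with a term of the form $\mathrm{ad}^*$-type contraction, and I would need to argue that $\{L_j,(\text{resulting function})\}=0$ for all $j$, i.e.\ that the function is coadjoint-invariant. I would handle this by exhibiting the leftover expression explicitly as $k\nabla I$, using that the commutativity identity obtained by differentiating $\{H_1,H_2\}=0$ forces the anomalous term to lie in the kernel of the coadjoint action. Once that is secured, the modified zero-curvature equation \eqref{modzce} follows immediately, with the Casimir $I$ and the constant $k$ arising intrinsically from the computation.
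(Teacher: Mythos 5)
The paper itself does not supply a proof of Proposition~\ref{mzc}: it is quoted as the standard ``consistency of two commuting Lax flows'' statement going back to~\cite{New}. The argument that actually matches the statement as written is the following. Since $\{H_1,H_2\}=0$, the two Hamiltonian flows commute, so $\partial_{t_1}\partial_{t_2}L=\partial_{t_2}\partial_{t_1}L$; expanding both sides by means of the Lax form \eqref{glax} and using ${\rm ad}^*_{[X,Y]}=[{\rm ad}^*_X,{\rm ad}^*_Y]$ one obtains ${\rm ad}^*_{W}L=0$ for $W=\partial_{t_2}\nabla H_1-\partial_{t_1}\nabla H_2+[\nabla H_1,\nabla H_2]$, and for generic $L$ the annihilator $\{X:{\rm ad}^*_XL=0\}$ is spanned by the gradients of Casimir functions, whence $W=k\,\nabla I$. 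Note that this argument controls only ${\rm ad}^*_WL$, never $W$ itself; that is exactly where the Casimir term on the right-hand side of \eqref{modzce} comes from.

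Your route is different, and this is where the gap lies. Differentiating the identity $\sum_{i,j,l}c^{\,l}_{ij}L_l\,\partial_iH_1\,\partial_jH_2=0$ with respect to $L_m$ produces three terms: the structure-constant term is precisely the $m$-th component of $[\nabla H_1,\nabla H_2]$, while the two Hessian terms, after substituting the Lax equations for $\partial L_i/\partial t_j$, are precisely the $m$-th components of $\partial_{t_2}\nabla H_1$ and of $-\partial_{t_1}\nabla H_2$. The identity therefore closes up to $W_m=0$ for every $m$: there is no leftover ``anomalous term'' to be identified with $k\nabla I$. So the plan as described cannot generate the right-hand side of \eqref{modzce}; either it proves the stronger assertion $k=0$ (which is what happens whenever the term-by-term differentiation of the infinite sums over the basis of $\widetilde{\mathfrak{g}}$ is legitimate), or it breaks down precisely at that interchange of infinite summations --- and then one is thrown back on the flow-commutation argument above. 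Your own text flags the ``final identification step'' as the main obstacle and proposes to settle it by showing that the leftover lies in the kernel of the coadjoint action; but that kernel statement is the content of the other proof, not a consequence of your differentiated identity. As it stands the central computation is both unexecuted and mis-anticipated (you expect a residual term where the bookkeeping in fact cancels completely), and you also do not address why $k$ should be a constant rather than a function of $L$ and of the times, which the annihilator argument alone does not give.
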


For the theory of soliton equations one  needs the usual
zero-curvature condition (case \mbox{$k=0$} in the above
proposition) and the inf\/inite set of the commuting Hamiltonians
generating the corresponding $U$-$V$ pairs. This can be achieved
requiring that $\widetilde{\mathfrak{g}}$ is
inf\/inite-dimensional of a~special type. In more detail, the
following theorem is true:

\begin{theorem}
Let  $\widetilde{\mathfrak{g}}$ be an infinite-dimensional Lie
algebra of $\mathfrak{g}$-valued function of the one complex
variable $u$ and a Lie algebra $\mathfrak{g}$ be semisimple. Let
$R$ be a classical $R$ operator on $\widetilde{\mathfrak{g}}$,
$L(u)$ be the generic element of the dual space
$\widetilde{\mathfrak{g}}^*$, $I_k(L(u))$ be  Casimir functions on
$\widetilde{\mathfrak{g}}^*$ such that functions
$I_k^{R_{\pm}}(L(u))$ are finite polynomials on
$\widetilde{\mathfrak{g}}^*$.  Then the
$\widetilde{\mathfrak{g}}$-valued functions  $\nabla
I_k^{R_{\pm}}(L(u))$ satisfy the zero-curvature equations:
\begin{gather}\label{zce1}
\frac{\partial \nabla I_k^{R_{\pm}}(L(u))}{\partial t^{\pm}_l} -
\frac{\partial \nabla I_l^{R_{\pm}}(L(u))}{\partial t^{\pm}_k} +
[\nabla I_k^{R_{\pm}}(L(u)), \nabla I_l^{R_{\pm}}(L(u))]=0,
\\ \label{zce2}
\frac{\partial \nabla I_k^{R_{\pm}}(L(u))}{\partial t^{\mp}_l} -
\frac{\partial \nabla I_l^{R_{\mp}}(L(u))}{\partial t^{\pm}_k} +
[\nabla I_k^{R_{\pm}}(L(u)), \nabla I_l^{R_{\mp}}(L(u))]=0.
\end{gather}
\end{theorem}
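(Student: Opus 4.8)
The plan is to obtain both families \eqref{zce1} and \eqref{zce2} as the $k=0$ specialization of the modified zero-curvature equation \eqref{modzce}, feeding that equation the commuting functions produced by Theorem~\ref{mainteo}. First I would record that every pair of functions appearing in \eqref{zce1}--\eqref{zce2} Poisson commutes: the like-signed pairs $\big(I_k^{R_+},I_l^{R_+}\big)$ and $\big(I_k^{R_-},I_l^{R_-}\big)$ by parts $(i)$ and $(ii)$ of Theorem~\ref{mainteo}, and the mixed pair $\big(I_k^{R_+},I_l^{R_-}\big)$ by part $(iii)$. Each such function serves as a Hamiltonian, and the parameters $t_k^\pm$ are the times of the Euler--Arnold flows \eqref{glax} generated by $V_k^\pm=\nabla I_k^{R_\pm}$; Proposition~\ref{mzc} requires only Poisson-commutativity, while the finiteness hypothesis will enter decisively only in the last step.

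Applying Proposition~\ref{mzc} to each commuting pair then shows, term by term, that the left-hand sides of \eqref{zce1} and \eqref{zce2} equal $k\,\nabla I$ for some Casimir $I\in I^G(\widetilde{\mathfrak g}^*)$ and some constant $k$. For the like-signed case one takes $(H_1,H_2)=(I_k^{R_\pm},I_l^{R_\pm})$ with times $(t_k^\pm,t_l^\pm)$, and for the mixed case $(H_1,H_2)=(I_k^{R_\pm},I_l^{R_\mp})$ with times $(t_k^\pm,t_l^\mp)$. In this way the whole theorem collapses to the single assertion that the residual Casimir gradient vanishes, $k\,\nabla I=0$.

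The remaining, and decisive, step is a grading argument in the spectral variable $u$, and this is where I expect the genuine work to lie. By the chain rule, $\nabla I_k^{R_\pm}(L)=R_\pm\,\nabla I_k\!\big(R_\pm^*L\big)$, so each $V_k^\pm$ lies in the subalgebra $\widetilde{\mathfrak g}_{R_\pm}=\mathrm{Im}\,R_\pm$, and the hypothesis that $I_k^{R_\pm}(L(u))$ is a finite polynomial forces $V_k^\pm$ to be a Laurent polynomial in $u$ supported in a bounded band of degrees. Since the flows \eqref{glax} are generated by such elements, the $u$-derivatives of the $V$'s remain in the same band and the commutator term lands in a correspondingly bounded band; hence the left-hand side of \eqref{zce1}--\eqref{zce2}, and therefore $k\,\nabla I$, is a $u$-polynomial whose support is dictated by the gradings of $\widetilde{\mathfrak g}_{R_\pm}$. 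The task is then to show that the gradient of a genuine Casimir of the whole loop algebra $\widetilde{\mathfrak g}$, evaluated on the generic element $L(u)$ -- regular for generic $u$ by semisimplicity of $\mathfrak g$, so that its centralizer is controlled -- cannot be a nonzero element of this band, so that comparing the coefficients of the extreme powers of $u$ in \eqref{modzce} pins down $k=0$. Carrying out this degree bookkeeping, namely tracking exactly which powers of $u$ survive in the left-hand side against those carried by $\nabla I$, is the main obstacle, and it is precisely the point at which the finiteness assumption is used in an essential way.
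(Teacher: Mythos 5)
Your plan is the paper's proof: reduce both families of equations to the modified zero-curvature equation \eqref{modzce} via Proposition~\ref{mzc} and parts $(i)$--$(iii)$ of Theorem~\ref{mainteo}, and then kill the residual term $k\,\nabla I$ using the finiteness hypothesis. Up to that point everything you write is correct and matches the paper step for step.

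The one place you stop short is exactly the step you flag as ``the main obstacle,'' and there the paper's resolution is much lighter than the degree bookkeeping you anticipate. For a loop algebra of $\mathfrak g$-valued functions with $\mathfrak g$ semisimple, the gradient of a Casimir function $I$ evaluated at the generic element is proportional to a power of $L(u)$ itself, i.e.\ it is a \emph{formal power series} in $u$ with generically infinitely many nonzero components. By contrast, as you correctly observe, the left-hand side of \eqref{zce1}--\eqref{zce2} is a finite linear combination of basis elements of $\widetilde{\mathfrak g}$ (a Laurent polynomial in a bounded band of degrees), because the $I_k^{R_\pm}(L(u))$ are finite polynomials. An infinite formal series cannot equal a finite polynomial unless its coefficient vanishes, so $k=0$ immediately; no regularity of $L(u)$, no control of centralizers, and no comparison of extreme powers of $u$ is required. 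So the missing idea is simply the identification of $\nabla I$ as an infinite formal power series in $u$ — once you say that, your argument is complete and coincides with the paper's.
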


\begin{proof} As it follows from  Proposition~\ref{mzc} and Theorem \ref{mainteo} the algebra-valued
gradients\linebreak  $\nabla I_k^{R_{\pm}}(L(u))$, $\nabla
I_l^{R_{\pm}}(L(u))$ and $\nabla I_k^{R_{\pm}}(L(u))$, $\nabla
I_l^{R_{\mp}}(L(u))$ satisfy the ``modif\/ied'' zero-curvature
equations (\ref{modzce}). On the other hand, as it is not
dif\/f\/icult to show, for the case of the Lie algebras
$\widetilde{\mathfrak{g}}$ described in the theorem the
algebra-valued gradients of the Casimir functions are proportional
to powers of the generic element of the dual space $L(u)$, i.e.\
to the formal power series. On the other hand,  due to the
condition that all $I_k^{R_{\pm}}(L(u))$ are f\/inite polynomials,
their algebra-valued gradients are {\it finite} linear
combinations of the basic elements of the Lie algebra
$\widetilde{\mathfrak{g}}$. Hence the corresponding modif\/ied
zero-curvature equations are satisf\/ied if and only if the
corresponding coef\/f\/icients~$k$ in these equations are equal to
zero, i.e.\ when they are reduced to the usual zero-curvature
conditions. This proves the theorem.
\end{proof}

\begin{remark} Note that equations (\ref{zce1}), (\ref{zce2})
def\/ine three types of integrable hierarchies: two ``small''
hierarchies associated with Lie subalgebras
$\widetilde{\mathfrak{g}}_{R_{\pm}}$ def\/ined by equations
(\ref{zce1}) and one ``large'' hierar\-chy associated with the whole
Lie algebra $\widetilde{\mathfrak{g}}$ that include both types of
equations~(\ref{zce1}) and~(\ref{zce2}). Equations (\ref{zce2})
have an interpretation of the ``negative f\/lows'' of the
integrable hierarchy associated with
$\widetilde{\mathfrak{g}}_{R_{\pm}}$. In the case
$\widetilde{\mathfrak{g}}_{R_{+}}\simeq
\widetilde{\mathfrak{g}}_{R_{-}}$ the corresponding ``small''
hierarchies are  equivalent and the large hierarchy associated
with $\widetilde{\mathfrak{g}}$ may be called a ``double'' of the
hierarchy associated with~$\widetilde{\mathfrak{g}}_{R_{\pm}}$.
\end{remark}

In the next subsection we will consider in detail a simple example
of the above theorem when $\widetilde{\mathfrak{g}}$ is a loop
algebra and $R$-operator is not of Kostant--Adler--Symes type but
a triangular one. We will be interested in the ``large'' hierarchy
associated with  $\widetilde{\mathfrak{g}}$ and, in more detail,
in the simplest equation of this hierarchy  which  will coincide
with the generalization of the Thirring equation.

\section{Loop algebras and generalized Thirring equations}
\subsection[Loop algebras and classical $R$-operators]{Loop algebras and classical $\boldsymbol{R}$-operators}\label{loop}

In this subsection we  remind several important facts from the
theory of loop algebras~\cite{Kac}.

 Let $\mathfrak{g}$ be semisimple
(reductive) Lie algebra.
  Let
$\mathfrak{g}=\sum\limits_{j=0}^{p-1} \mathfrak{g}_{\overline{j}}$
be $Z_p=\mathbb{Z}/p \mathbb{Z}$ grading of $\mathfrak{g}$, i.e.:
$[\mathfrak{g}_{\overline{i}},\mathfrak{g}_{\overline{j}}]\subset
\mathfrak{g}_{\overline{i+j}}$ where $\overline{j}$ denotes the
class of equivalence of the elements $j\in \mathbb{Z}$
$\mathrm{mod} p\ \mathbb{Z}$. It is known that the
 $\mathbb{Z}_p$-grading of $\mathfrak{g}$ may be
 def\/ined with the help of some automorphism
 $\sigma$ of the  order $p$, such that
 $\sigma(\mathfrak{g}_{\overline{i}})= e^{2\pi i k/p}
 \mathfrak{g}_{\overline{i}}$ and $\mathfrak{g}_{\overline{0}}$ is the algebra of
$\sigma$-invariants:
$\sigma(\mathfrak{g}_{\overline{0}})=\mathfrak{g}_{\overline{0}}$.

Let $\widetilde{\mathfrak{g}}=\mathfrak{g}\otimes {\rm Pol}(u,
u^{-1})$ be a  loop algebra.
 Let us
consider the following subspace in $\widetilde{\mathfrak{g}}$:
\begin{equation*}
\widetilde{\mathfrak{g}}^{\sigma}=\bigoplus\limits_{j\in
\mathbb{Z}}\mathfrak{g}_{\overline{j}} \otimes u^j.
\end{equation*}
It is known \cite{Kac} that this subspace is a closed Lie
subalgebra and if we extend the automorphism~$\sigma$ to the map
$\tilde{\sigma}$ of the whole algebra $\widetilde{\mathfrak{g}}$,
def\/ining its action on the space $\mathfrak{g}\otimes {\rm
Pol}(u, u^{-1})$ in the standard way \cite{Kac}:
$\tilde{\sigma}(X\otimes u^k)=\sigma(X)\otimes e^{-2\pi i k/p}
u^k$, then the subalgebra
 $\widetilde{\mathfrak{g}}^{\sigma}$ can be def\/ined  as  the subalgebra of
  $\tilde{\sigma}$-invariants in $\widetilde{\mathfrak{g}}$:
\[
\widetilde{\mathfrak{g}}^{\sigma}=\{X\otimes
p(u)\in\widetilde{\mathfrak{g}}|\ \tilde{\sigma}( X\otimes p(u))=
X\otimes p(u)\}.
\]

We will call the algebra $\widetilde{\mathfrak{g}}^{\sigma}$ the
loop subalgebra ``twisted''  with the help of $\sigma$. The basis
in~$\widetilde{\mathfrak{g}}^{\sigma}$ consists of algebra-valued
functions $\{X^j_{\alpha}\equiv X^{\overline{j}}_{\alpha} u^j \}$,
where $X^{\overline{j}}_{\alpha}\in \mathfrak{g}_{\overline{j}}$.
Let us def\/ine the pairing between
$\widetilde{\mathfrak{g}}^{\sigma}$ and
$(\widetilde{\mathfrak{g}}^{\sigma})^*$ in the standard way:
\begin{equation*}
\langle X,Y \rangle=\mathop{\rm
res}\limits_{u=0}u^{-1}(X(u),Y(u)),
\end{equation*}
where $X\in \widetilde{\mathfrak{g}}^{\sigma}$, $Y \in
(\widetilde{\mathfrak{g}}^{\sigma})^*$ and $(\ ,\ )$ is a
bilinear, invariant, nondegenerate form on $\mathfrak{g}$. It is
easy to see that with respect to such a pairing the
 dual space
$(\widetilde{\mathfrak{g}}^{\sigma})^*$ may be identif\/ied with
the Lie algebra $\widetilde{\mathfrak{g}}^{\sigma}$ itself. The
dual basis in $(\widetilde{\mathfrak{g}}^{\sigma})^*$ has the
form: $\{Y^j_{\alpha}\equiv X^{\overline{-j},\alpha} u^{-j} \}$,
where $X^{\overline{-j},\alpha}$ is a~dual basis in the space
$\mathfrak{g}_{\overline{-j}}$.

 The Lie algebra
$\widetilde{\mathfrak{g}}^{\sigma}$ possesses  KAS decomposition
$\widetilde{\mathfrak{g}}^{\sigma}=\widetilde{\mathfrak{g}}^{\sigma+}+
\widetilde{\mathfrak{g}}^{\sigma-}$ \cite{RST}, where
\begin{equation*}
\widetilde{\mathfrak{g}}^{\sigma+}=\bigoplus\limits_{j \geq 0
}\mathfrak{g}_{\overline{j}} \otimes u^j, \qquad
\widetilde{\mathfrak{g}}^{\sigma-}=\bigoplus\limits_{j < 0
}\mathfrak{g}_{\overline{j}} \otimes u^j.
\end{equation*}
It def\/ines in a natural way the Kostant--Adler--Symes
$R$-operator:
\[
R=P^+-P^-,
\]
 where  $P^{\pm}$ are projection operators onto Lie
algebra $\widetilde{\mathfrak{g}}^{\sigma \pm}$.

The twisted loop  algebra $\widetilde{\mathfrak{g}}^{\sigma}$ also
possesses ``triangular'' decomposition:
\[
\widetilde{\mathfrak{g}}^{\sigma}=\widetilde{\mathfrak{g}}^{\sigma}_+
+ {\mathfrak{g}}_0 + \widetilde{\mathfrak{g}}^{\sigma}_-,
\]
where
$\widetilde{\mathfrak{g}}^{\sigma-}\equiv\widetilde{\mathfrak{g}}^{\sigma}_-$,
$\widetilde{\mathfrak{g}}^{\sigma+} \equiv
\widetilde{\mathfrak{g}}^{\sigma}_+ + {\mathfrak{g}}_0$.

It def\/ines in a natural way the triangular $R$-operator
\[
R=P_+ +R_0-P_-,
\]
where  $P_{\pm}$ are the projection operators onto Lie algebra
$\widetilde{\mathfrak{g}}_{\sigma \pm}$ and $R_0$ is an
$R$-operator on ${\mathfrak{g}}_0$.

The Lie algebra ${\mathfrak{g}}_0$ in this decomposition is a
reductive Lie subalgebra of the Lie algebra~${\mathfrak{g}}$. Due
to the fact that a lot of solutions of the modif\/ied classical
Yang--Baxter equations on reductive Lie algebras are known,  one
can construct explicitly $R$-operators $R_0$ on
${\mathfrak{g}}_0$, the Lie subalgebras
$\widetilde{\mathfrak{g}}_{R_{\pm}}=\widetilde{\mathfrak{g}}_{\pm}
+ {\rm Im} (R_0)_{\pm}$ and the Poisson-commutative functions
constructed in the Theorem~\ref{mainteo}.

\subsection{Generalized Thirring models}
As we have seen in the previous subsection, each of the gradings
of the loop algebras, corresponding to the dif\/ferent
automorphisms $\sigma$ yields its own triangular decomposition.
Let us consider the simplest equations of integrable hierarchies,
corresponding to dif\/ferent triangular decompositions in more
detail.  The generic elements of the dual space to the Lie
algebras $\widetilde{\mathfrak{g}}_{R_{\pm}}$ are written as
follows:
\begin{gather*}
L_{\pm}(u)=r^*_{\pm}\left(\sum\limits_{\alpha=1}^{\mathrm{dim}
\mathfrak{g}_{\bar{0}}} L^0_{\alpha}
X^{\overline{0},\alpha}\right)+ \sum\limits_{j=\pm 1}^{\pm \infty
} \sum\limits_{\alpha=1}^{\mathrm{dim} \mathfrak{g}_{\bar{j}}}
L^{(j)}_{\alpha} X^{\overline{-j},\alpha} u^{-j}= r^*_{\pm}(
L^{(0)}) + \sum\limits_{j=\pm 1}^{\pm \infty } L^{(j)} u^{-j},
\end{gather*}
where $L^{(-j)}\in  \mathfrak{g}_{\bar{j}}$ and in order to
simplify the notations we put $r_{\pm}\equiv \frac{(1\pm
R_0)}{2}$.

Let $(\ ,\ )$ be an invariant nondegenerated form on the
underlying semisimple (reductive) Lie algebra $\mathfrak{g}$. Then
$I_2(L)=\frac{1}{2} (L ,L )$ is a second order Casimir function on
$\mathfrak{g}^*$. The corresponding generating function of the
second order integrals has the form:
\begin{gather*}
I^{\pm}_2(L(u))= I_2(L_{\pm}(u))= \frac{1}{2} \left( r_{\pm}^*(
L^{(0)}) + \sum\limits_{j=\pm 1}^{\pm \infty } L^{(j)} u^{-j},
r_{\pm}^*( L^{(0)}) + \sum\limits_{j=\pm 1}^{\pm \infty } L^{(j)}
u^{-j} \right)
\\ \phantom{I^{\pm}_2(L(u))= I_2(L_{\pm}(u))}{}
 =\sum\limits_{k=0}^{\pm \infty } I_2^{pk}u^{-pk},
\end{gather*}
where $p$ is an order of  $\sigma$. The simplest of these
integrals are:
\begin{gather*}
 I_2^{\pm 0}= \frac{1}{2} \bigl( r_{\pm}^*( L^{(0)}),
r_{\pm}^*(L^{(0)}) \bigr),\qquad
  I_2^{\pm p}= \bigl( r_{\pm}^* (L^{(0)}), L^{(\pm p)} \bigr) +
  \frac{1}{2} \sum\limits_{j=1}^{p-1}
\bigl(  L^{(\pm j)}, L^{(\pm (p-j))} \bigr).
\end{gather*}
The algebra-valued gradients of functions $I_2^{\pm p}$ read as
follows:
\begin{equation}\label{uvpt}
\nabla I_2^{\pm p}(u)= u^{\pm p} r^*_{\pm}( L^{(0)}) +
  \sum\limits_{j=1}^{p-1} u^{\pm (p-j)}
  L^{(\pm j)} + r_{\pm} ( L^{(\pm p)}).
\end{equation}
 They coincide with the  $U$-$V$ pair of the generalized Thirring model. The
corresponding zero-curvature condition:
\begin{equation}\label{zce2t}
\frac{\partial \nabla I_2^{+ p}(L(u))}{ \partial x_-} -
\frac{\partial \nabla I_2^{- p}(L(u))}{\partial x_+} + [\nabla
I_2^{+ p}(L(u)), \nabla I_2^{- p}(L(u))]=0
\end{equation}
yields the following system of dif\/ferential equations in partial
derivatives:
\begin{subequations}\label{gte}
\begin{gather} \label{gte1}
-\partial_{x_-} r^*_+(L^{(0)})=[r^*_+(L^{(0)}), r_-(L^{(-p)})],
\\
\label{gte2}
\partial_{x_+} r_-(L^{(0)})=[r_+(L^{(p)}), r^*_-(L^{(0)})],
\\
\label{gte3} -\partial_{x_-} L^{(p-j)}=[L^{(p-j)}, r_-(L^{(-p)})]
+ \sum\limits_{i=j+1}^{p-1} [L^{(p-i)}, L^{(i-j-p)}]
+[r^*_+(L^{(0)}), L^{(-j)}],
\\
\partial_{x_+} L^{(j-p)}=[ r_+(L^{(p)}), L^{(j-p)})] +
\sum\limits_{i=j+1}^{p-1} [L^{(p+j-i)}, L^{(j-p)}]\nonumber\\
\phantom{\partial_{x_+} L^{(j-p)}=}{} +[L^{(j)},r^*_-(L^{(0)})],
j\in 1,p-1,\label{gte4}
\\
\partial_{x_+} r_-(L^{(-p)}) - \partial_{x_-}r_+( L^{(p)})=[ r^*_+(L^{(0)}),
 r^*_-(L^{(0)})] +
\sum\limits_{i=1}^{p-1} [L^{(p-i)}, L^{(i-p)}]\nonumber\\
\phantom{\partial_{x_+} r_-(L^{(-p)}) - \partial_{x_-}r_+(
L^{(p)})=}{}+[r_+(L^{(p)}),r_-(L^{(-p)})].\label{gte5}
\end{gather}
\end{subequations}
These equations are {\it integrable generalization of Thirring
equations} corresponding to Lie al\-geb\-ra~$\mathfrak{g}$, its
$Z_p$-grading def\/ined with the help of the automorphism $\sigma$
of the order $p$ and the classical $R$-operator $R_0$ on
$\mathfrak{g}_{\bar 0}$. We will call this system of equations the
{\it non-Abelian generalized Thirring system}. In order to
recognize in this complicated system of hyperbolic equations
generalization of Thirring equations  we will consider several
examples.

\subsubsection{Case of the second-order automorphism}
Let us consider the case when automorphism $\sigma$ is involutive,
i.e.\ $p=2$. Then $\mathfrak{g}_{\overline{1}}=
\mathfrak{g}_{\overline{-1}}$ and  Lax matrices  $L_{\pm}(u)$ are
the following:
\begin{equation*} L_{\pm }(u)= r_{\pm}^*( L^{(0)}) + u^{\mp 1}L^{(\pm
1)}+ u^{\mp 2}L^{(\pm 2)} + \cdots ,
\end{equation*}
and the $U$-$V$ pair (\ref{uvpt}) acquire  more simple form:
\begin{equation}\label{uvpts}
\nabla I_2^{\pm 2}(u)= u^{\pm 2} r^*_{\pm}( L^{(0)}) +
  u^{\pm 1} L^{(\pm 1)} + r_{\pm} ( L^{(\pm 2)}).
\end{equation}
The corresponding zero-curvature condition yields the following
system of dif\/ferential equations in  partial derivatives:
\begin{subequations}\label{gtes}
\begin{gather} \label{gtes1}
\partial_{x_-} r^*_+(L^{(0)})=[ r_-(L^{(-2)}), r^*_+(L^{(0)})],
\\
\label{gtes2}
\partial_{x_+} r^*_-(L^{(0)})=[r_+(L^{(2)}), r^*_-(L^{(0)})],
\\
\label{gtes3}
\partial_{x_-} L^{(1)}=[r_-(L^{(-2)}), L^{(1)}] + [L^{(-1)}, r^*_+(L^{(0)})],
\\
\label{gtes4}
\partial_{x_+} L^{(-1)}=[ r_+(L^{(2)}), L^{(-1)}]
+[L^{(1)},r^*_-(L^{(0)})],
\\
\partial_{x_+} r_-(L^{(-2)})- \partial_{x_-} r_+(L^{(2)})=[ r^*_+(L^{(0)}),
 r^*_-(L^{(0)})] +
 [L^{(1)}, L^{(-1)}]\nonumber\\
\qquad{}{}+[r_+(L^{(2)}),r_-(L^{(-2)})].\label{gtes5}
\end{gather}
\end{subequations}

The system of equations (\ref{gtes}) is still suf\/f\/iciently
complicated. In order to recognize in this system the usual
Thirring equations we have to consider the
case~$\mathfrak{g}=sl(2)$.

\begin{example} Let $\mathfrak{g}=sl(2)$ and $\sigma$ be the
Cartan involution, i.e. $sl(2)_{\bar{0}}=\mathfrak{h}={\rm
diag}\,(\alpha,-\alpha)$ and $sl(2)_{\bar{1}}$ consists of the
matrices with  zeros on the diagonal. The Lax matrices
$L_{\pm}(u)$ have the following form:
\begin{gather*} L_{\pm }(u)= r_{\pm}^*  \left(%
\begin{array}{cc}
  \alpha^{(0)} & 0 \\
  0 & -\alpha^{(0)} \\
\end{array}%
\right)  + u^{\mp 1}\left(%
\begin{array}{cc}
   0 & \beta^{(\pm 1)} \\
  \gamma^{(\pm 1)} & 0  \\
\end{array}%
\right)+ u^{\mp 2}\left(%
\begin{array}{cc}
  \alpha^{(\pm 2)} & 0 \\
  0 & -\alpha^{( \pm 2)} \\
\end{array}%
\right) + \cdots .
\end{gather*}
Due to the fact that $sl(2)_{\bar{0}}$ is Abelian the
one-dimensional
 linear maps $r_{\pm}$ are written as follows: $r_{\pm}^*( L^{(0)})=k_{\pm}  L^{(0)}$ where $k_{\pm}$ are some constants,
 such that $k_{+}+ k_{-}=1$.

The simplest  Hamiltonians obtained in the framework of our scheme
are:
\begin{gather*}
 I_2^{\pm 0}=  k_{\pm} (\alpha^{(0)})^2,
\qquad
 I_2^{\pm 2}= 2k_{\pm}\alpha^{(0)}\alpha^{(\pm 2)} + \beta^{(\pm 1)}\gamma^{(\pm 1)}.
\end{gather*}
As it follows from the Theorem \ref{mainteo} and, as it is also
easy to verify by the direct calculations, these Hamiltonians
commute with respect to the standard Lie--Poisson brackets on
$\widetilde{sl(2)}^{\sigma}$:
\begin{gather*}
\{\alpha^{(2k)},\beta^{(2l+1)}\}=\beta^{(2(k+l)+1)},\qquad
\{\alpha^{(2k)},\gamma^{(2l+1)}\}=-\gamma^{(2(k+l)+1)},\\
\{\beta^{(2k+1)},\gamma^{(2l+1)}\}=2\alpha^{(2(k+l)+2)},\\
\{\alpha^{(2k)},\alpha^{(2l)}\}=
\{\beta^{(2k+1)},\beta^{(2l+1)}\}=
\{\gamma^{(2k+1)},\gamma^{(2l+1)}\}=0.
\end{gather*}
As a consequence, they produce the correct $U$-$V$ pair for
zero-curvature equations:
\begin{gather*}
 \nabla I_2^{\pm 2}= k_{\pm}\left(%
\begin{array}{cc}
  \alpha^{(\pm 2)} & 0 \\
  0 & -\alpha^{(\pm 2)} \\
\end{array}%
\right) + u^{\pm 1}\left(%
\begin{array}{cc}
   0 & \beta^{(\pm 1)} \\
  \gamma^{(\pm 1)} & 0  \\
\end{array}%
\right)+ k_{\pm}u^{\pm2}\left(%
\begin{array}{cc}
  \alpha^{(0)} & 0 \\
  0 & -\alpha^{(0)} \\
\end{array}%
\right),
\end{gather*}
where $U\equiv \nabla I_2^{+ 2}$, $V\equiv \nabla I_2^{- 2}$.

Due to the fact that $I_2^{\pm 0}$ are constants of motion we have
that $\alpha^{(0)}$ is also a constant of motion  and equations
(\ref{gtes1}), (\ref{gtes2}) are satisf\/ied automatically. The
integrals $I_2^{\pm 2}$ are constants of motion too, that is why
$\alpha^{(\pm 2)}$ are expressed via $\beta^{(\pm 1)}$,
$\gamma^{(\pm 1)}$. Hence, equation (\ref{gtes5}) is not
independent and follows from
 equations (\ref{gtes3}), (\ref{gtes4}). These equations are
independent. The last summand in the equations (\ref{gtes3}),
(\ref{gtes4}) gives the linear ``massive term'' in the Thirring
equation. The f\/irst summand gives the cubic non-linearity.

Let us write the equations (\ref{gtes3}), (\ref{gtes4}) in the
case at hand in more detail taking into account the explicit form
of the matrices $L^{(k)}$ and linear operators $r_{\pm}$:
\begin{gather}
\partial_{x_-}\gamma^{(1)}=2k_+\alpha^{(0)}\gamma^{(-1)} -2k_-\alpha^{(-2)} \gamma^{(1)},
\nonumber\\
\partial_{x_-}\beta^{(1)}=-2k_+\alpha^{(0)}\beta^{(-1)} + 2k_-\alpha^{(-2)} \beta^{(1)},
\nonumber\\
\partial_{x_+}\gamma^{(-1)}=2k_-\alpha^{(0)}\gamma^{(1)} -2k_+\alpha^{(2)} \gamma^{(-1)},
\nonumber\\
\partial_{x_+}\beta^{(-1)}=-2k_-\alpha^{(0)}\beta^{(1)} + 2k_+\alpha^{(2)} \beta^{(-1)},\label{comthir}
\end{gather}
where we have put  $I_2^{\pm 2}=0$ and, hence, $\alpha^{(\pm
2)}=-\frac{1}{2k_{\pm}\alpha^{(0)}} (\beta^{(\pm 1)}\gamma^{(\pm
1)})$.

The system of equations (\ref{comthir}) admits a reduction
$\alpha^{(0)}=ic$, $\gamma^{(\pm 1)}= -\bar{\beta}^{(\pm 1)}\equiv
-\bar{\psi}_{\pm 1}$, which corresponds to the restriction onto
the real subalgebra $su(2)$ of the complex  Lie algebra $sl(2)$.

After such a reduction the system of equations (\ref{comthir}) is
simplif\/ied and acquires the  form:
\begin{gather*}
ic\partial_{x_-}\psi_{1}=2k_+c^2\psi_{-1} + |\psi_{-1}|^2
\psi_{1},\qquad
ic\partial_{x_+}\psi_{-1}=2k_-c^2\psi_{1} + |\psi_{1}|^2 \psi_{-1}.
\end{gather*}
In the case $k_+=k_-=\frac{1}{2}$ these  equations are usual
Thirring equations with a mass $m=c^2$.
\end{example}

\subsubsection{Matrix generalization of Thirring equation}

Let us return to the generalized Thirring model in the case of the
higher rank Lie algebra $\mathfrak{g}$, its automorphism of the
second order and corresponding $Z_2$-grading of $\mathfrak{g}$:
$\mathfrak{g}= \mathfrak{g}_{\bar{0}} + \mathfrak{g}_{\bar{1}}$.
We are interested in the cases that will be maximally close to the
case of the ordinary Thirring equation. In particular, we wish to
have $r^*_{\pm}(L^{(0)})$ that enter into our $U$-$V$ pair
(\ref{uvpts}) near the second order of spectral parameter to be
constant along all time f\/lows. As it follows from the equations
(\ref{gtes1}), (\ref{gtes2}),  this is not true for the general
$r$-matrix $R_0$ on $\mathfrak{g}_{\bar{0}}$. Fortunately there
are very special cases when it is indeed so. The following
proposition holds true:

\begin{proposition}
Let $\mathfrak{g}_{\bar{0}}$ admit the decomposition into direct
sum of two reductive subalgebras: $\mathfrak{g}_{\bar{0}}=
\mathfrak{g}_{\bar{0}}^+ \oplus \mathfrak{g}_{\bar{0}}^-$. Let $
L^{(0)}= L^{(0)}_++ L^{(0)}_-$ be the corresponding decomposition
of the element of the dual space. Let
$\zeta(\mathfrak{g}_{\bar{0}}^{\pm})$ be a center of the
subalgebra $\mathfrak{g}_{\bar{0}}^{\pm}$, $K(L^{(0)}_{\pm})$ be a
part of $L^{(0)}_{\pm}$ dual to the  center of the subalgebra
$\mathfrak{g}_{\bar{0}}^{\pm}$. Then $R_0=P_0^+- P^-_0$ is the $R$
operator on $\mathfrak{g}_{\bar{0}}$, $r_{+}(L^{(0)})=L^{(0)}_+$,
$ r_{-}(L^{(0)})= L^{(0)}_-$  are constant along all time flows
and the reduction $ L^{(0)}= K_++ K_-$, where $K_{\pm}$ is a
constant element of $\zeta(\mathfrak{g}_{\bar{0}}^{\pm})^*$, is
consistent with all equations of the hierarchy \eqref{zce1},
\eqref{zce2} corresponding to the loop algebra
$\widetilde{\mathfrak{g}}^{\sigma}$ and triangular $R$-operator on
$\widetilde{\mathfrak{g}}^{\sigma}$ with the described above KAS
$R$-operator $R_0$ on $\mathfrak{g}_{\bar{0}}$.
\end{proposition}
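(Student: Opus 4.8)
The plan is to proceed in three stages: first identify the $R$-operator and the maps $r_\pm$, then establish the conservation laws along the whole hierarchy, and finally verify that the central reduction is consistent.

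First I would note that the hypothesis $\mathfrak{g}_{\bar{0}}=\mathfrak{g}_{\bar{0}}^{+}\oplus\mathfrak{g}_{\bar{0}}^{-}$ is a decomposition into two \emph{commuting} ideals, so that $[\mathfrak{g}_{\bar{0}}^{+},\mathfrak{g}_{\bar{0}}^{-}]=0$ and the two summands are orthogonal for the invariant form $(\,,\,)$. Then $R_0=P_0^{+}-P_0^{-}$ is the Kostant--Adler--Symes $R$-operator attached to this splitting and hence solves \eqref{mybe} on $\mathfrak{g}_{\bar{0}}$. Since $P_0^{+}+P_0^{-}=\mathrm{Id}_0$, I get $r_\pm=\frac{1\pm R_0}{2}=P_0^{\pm}$, and orthogonality of the ideals makes these projections self-adjoint, so $r_\pm^{*}(L^{(0)})=r_\pm(L^{(0)})=L^{(0)}_{\pm}$, which settles the first two assertions. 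A useful structural remark is that, because $(R_0)_{+}=2P_0^{+}$ and $(R_0)_{-}=-2P_0^{-}$ have images meeting only in $0$, the triangular $R$-operator on $\widetilde{\mathfrak{g}}^{\sigma}$ degenerates to a genuine KAS $R$-operator for the \emph{direct} sum $\widetilde{\mathfrak{g}}^{\sigma}=\mathfrak{a}_{+}\oplus\mathfrak{a}_{-}$, where $\mathfrak{a}_\pm:=\widetilde{\mathfrak{g}}_{R_\pm}=\widetilde{\mathfrak{g}}^{\sigma}_{\pm}+\mathfrak{g}_{\bar{0}}^{\pm}$ are closed subalgebras intersecting trivially.

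For the conservation laws I would test the Euler--Arnold equation \eqref{glax} against an arbitrary \emph{constant} element $Z\in\mathfrak{g}_{\bar{0}}^{+}$ sitting at $u^{0}$, so that $\langle L,Z\rangle=(L^{(0)}_{+},Z)$ recovers $L^{(0)}_{+}$ completely. For a flow generated by $I_k^{R_\pm}$ one has $\frac{d}{dt}(L^{(0)}_{+},Z)=\pm\langle L,[Z,\nabla I_k^{R_\pm}]\rangle$, so it suffices to make this pairing vanish. I would use three facts: (a) $\nabla I_k^{R_\pm}=P_{\mathfrak{a}_\pm}\nabla I_k(L_\pm)\in\mathfrak{a}_\pm$ and $I_k$ is a coadjoint invariant, whence $\langle L_\pm,[Z,\nabla I_k(L_\pm)]\rangle=0$; (b) for $Z\in\mathfrak{g}_{\bar{0}}^{+}$ the operator $[Z,\cdot]$ preserves the $\mathbb{Z}$-grading, maps $\widetilde{\mathfrak{g}}^{\sigma}_{\mp}$ into itself and annihilates $\mathfrak{g}_{\bar{0}}^{-}$; and (c) $\langle L,W\rangle=\langle L_\pm,W\rangle$ for $W\in\mathfrak{a}_\pm$, while $\langle L_\pm,W'\rangle=0$ as soon as $W'$ is supported on $u$-powers disjoint from those of $L_\pm$ or has its $u^{0}$-part in $\mathfrak{g}_{\bar{0}}^{\mp}$. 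Splitting $\nabla I_k(L_\pm)=P_{\mathfrak{a}_{+}}\nabla I_k(L_\pm)+P_{\mathfrak{a}_{-}}\nabla I_k(L_\pm)$, substituting into (a) and discarding the terms killed by (b)--(c) leaves $\frac{d}{dt}(L^{(0)}_{+},Z)=0$ for both the $R_+$ and the $R_-$ flows; the mirror computation with $Z\in\mathfrak{g}_{\bar{0}}^{-}$ handles $L^{(0)}_{-}$. Thus $L^{(0)}_{\pm}$ are constant along every flow.

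The step I expect to be the real obstacle is the \emph{cross} flow, i.e.\ conservation of $L^{(0)}_{+}$ along the $R_-$ flows (and conversely): there the relevant bracket does not land outright in a subspace orthogonal to $L$, and one must combine the Casimir identity at $L_-$ with the grading argument (b) \emph{and} the ideal orthogonality (c) at once to cancel the surviving contributions. At the bottom level this is precisely the content of \eqref{gtes1}, \eqref{gtes2} (and \eqref{gte1}, \eqref{gte2} for general $p$): their right-hand sides are brackets of the form $[\mathfrak{g}_{\bar{0}}^{\mp},\mathfrak{g}_{\bar{0}}^{\pm}]=0$, so $\partial_{x_\mp}L^{(0)}_{\pm}=0$ is manifest. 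Finally, for the reduction $L^{(0)}=K_{+}+K_{-}$ with $K_\pm\in\zeta(\mathfrak{g}_{\bar{0}}^{\pm})^{*}$: since $L^{(0)}_{\pm}$ are conserved \emph{as elements}, the constraint surface is automatically flow-invariant, so it only remains to check compatibility with \eqref{gte3}--\eqref{gte5}. There $L^{(0)}_{\pm}$ enters solely through brackets $[r_\pm^{*}(L^{(0)}),\,\cdot\,]=[K_\pm,\,\cdot\,]$; centrality of $K_\pm$ in $\mathfrak{g}_{\bar{0}}^{\pm}$ together with $[K_\pm,\mathfrak{g}_{\bar{0}}^{\mp}]=0$ makes every such bracket with a $\mathfrak{g}_{\bar{0}}$-valued field vanish, while leaving the brackets with the $\mathfrak{g}_{\bar{1}}$-fields as the Thirring ``mass'' terms, so the reduced system closes consistently, completing the proof.
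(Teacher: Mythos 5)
Your argument is correct, and it reaches the same two pillars the paper rests on — that with $R_0=P_0^+-P_0^-$ the triangular $R$-operator degenerates to a genuine Kostant--Adler--Symes operator for $\widetilde{\mathfrak{g}}^{\sigma}=\mathfrak{a}_+\oplus\mathfrak{a}_-$, and that $[\mathfrak{g}_{\bar{0}}^{+},\mathfrak{g}_{\bar{0}}^{-}]=0$ kills the only surviving term — but it executes them in a dual formulation. The paper works directly with the Lax equations: it writes $\nabla I_2^{\pm 2k}=P_{\pm}(u^{2k}L_{\pm}(u))+r_{\pm}(L^{(\pm 2k)})$ explicitly, restricts the flow to $L_{\pm}(u)$ (legitimate precisely because the decomposition is KAS), projects onto $\mathfrak{g}_{\bar{0}}$, and reads off $\tfrac{d}{dt}r_{\pm}(L^{(0)})=\mp[r_{\mp}(L^{(\pm 2k)}),r_{\pm}(L^{(0)})]=0$; the higher Casimirs are then dispatched as ``analogous''. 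You instead pair $L$ against constant test elements $Z\in\mathfrak{g}_{\bar{0}}^{\pm}u^{0}$ and combine ${\rm ad}$-invariance of $I_k$ at $L_{\pm}$ with the grading and the orthogonality $\langle L_{\pm},\mathfrak{a}_{\mp}\rangle=0$. What your route buys is uniformity in the degree of the Casimir — the step the paper leaves implicit — and you correctly flag that the cross flows are where invariance and orthogonality must be used together rather than a single term vanishing outright. What it costs is that you assume the two ideals are orthogonal for the invariant form (needed for $r_{\pm}^{*}=r_{\pm}$); invariance gives this for the derived parts but not automatically for the centers, though it holds in all the paper's examples and the paper itself only asserts $r_{\pm}^{*}=r_{\pm}$ without comment. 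The one substantive divergence is the justification of the reduction: you argue that any level set of the conserved $L^{(0)}_{\pm}$ is flow-invariant and then use centrality only to close the reduced equations, which does establish the consistency claimed in the proposition; the paper's point is sharper, namely that because the components of $L^{(0)}$ Poisson-commute only up to the structure constants of $\mathfrak{g}_{\bar{0}}$, a Hamiltonian-consistent (first-class) reduction forces the $([\mathfrak{g}_{\bar{0}},\mathfrak{g}_{\bar{0}}])^{*}$-part to zero and permits arbitrary constants only on $\zeta(\mathfrak{g}_{\bar{0}}^{\pm})^{*}$ — an explanation of why $K_{\pm}$ must be central that your flow-invariance argument does not supply.
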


\begin{proof} Let us at f\/irst note that in the case under
consideration $r_{\pm}^*=r_{\pm}$. Let us show that
$r^*_{\pm}(L^{(0)})=r_{\pm}(L^{(0)})$  are constant along all time
f\/lows generated by the second order Hamilto\-nians~$I_2^{\pm
2k}$. It is easy to show that the corresponding algebra-valued
gradients have in this case the following form:
\[
\nabla I_2^{\pm 2k}= P_{\pm} ( u^{2k} L_{\pm}(u)) +
r_{\pm}(L^{(\pm 2k)})= u^{2k} L_{\pm}(u)-P_{\mp} ( u^{2k}
L_{\pm}(u))- r_{\mp}(L^{(\pm 2k)}) ,
\]
 where $P_{\pm}$ are the projection operators onto the
subalgebras $\widetilde{\mathfrak{g}}^{\sigma}_{\pm}$ in the
triangular decomposition of  the Lie algebra
$\widetilde{\mathfrak{g}}^{\sigma}$ and we took into account that
$r_+ + r_-=1$. Let us substitute this expression into the Lax
equation:
\begin{equation}\label{icl}
\frac{dL(u)}{d t^{\pm}_{2k}}=[ \nabla I_2^{\pm 2k}, L(u)],
\end{equation}
and take into account that the case under consideration
corresponds to the Kostant--Adler--Symes decomposition and, hence,
we have correctly def\/ined the decomposition $L(u)=L_+(u) +
L_-(u)$, and each of two inf\/inite-component Lax equation
(\ref{icl}) is correctly restricted to  each  of the subspaces
$L_{\pm}(u)$:
\begin{gather*}
\frac{dL_{\pm}(u)}{d t^{\pm}_{2k}}=[ -P_{\mp} ( u^{2k}
L_{\pm}(u))- r_{\mp}(L^{(\pm 2k)}),
 L_{\pm}(u)],\\
\frac{dL_{\pm}(u)}{d t^{\mp}_{2k}}=[ P_{\mp} ( u^{2k} L_{\mp}(u))
+ r_{\mp}(L^{(\mp 2k)}), L_{\pm}(u)].
\end{gather*}
Making projection onto the subalgebra $\mathfrak{g}_{\bar{0}}$ in
these equations we  obtain the  equations:
\begin{gather*}
\frac{d r_{\pm}( L^{(0)})}{d t^{\pm}_{2k}}=- [ r_{\mp}(L^{(\pm
2k)}), r_{\pm} (L^{(0)})],\qquad \frac{d r_{\pm}( L^{(0)})}{d
t^{\mp}_{2k}}= [ r_{\mp}(L^{(\mp 2k)}),
r_{\pm} (L^{(0)})].
\end{gather*}
Due to the fact that our $r$-matrix is of Kostant--Adler--Symes
type, we have $[r_{\mp}(X), r_{\pm} (Y)]=0$, $\forall\, X,Y \in
\mathfrak{g}_{\bar{0}}$ and, hence, $r_{+} (L^{(0)})\equiv
L^{(0)}_{+}$, $r_{-} (L^{(0)})\equiv L^{(0)}_{-}$ are constant
along all time f\/lows generated by $I_2^{\pm 2k}$. In analogous
way it is shown that $L^{(0)}= L^{(0)}_{+} + L^{(0)}_{-} $ are
constant with respect to the time f\/lows generated by the higher
order integrals $I_m^{\pm 2k}$. Hence, in this case components of
$L^{(0)}$ belong to the algebra of the integrals of motion of our
inf\/inite-component Hamiltonian or Lax system. This algebra of
integrals is, generally speaking, non-commutative i.e.\
$\{l^{(0)}_{\alpha}, l^{(0)}_{\beta} \}=c_{\alpha\beta}^{\gamma}
l^{(0)}_{\gamma}$, where $c_{\alpha\beta}^{\gamma}$ are structure
constants of the Lie algebra $\mathfrak{g}_{\bar{0}}$. That is why
in order to have a correct and consistent reduction with respect
to these integrals   we have to restrict the dynamics to the
surface of zero level of the integrals belonging to
$([\mathfrak{g}_{\bar{0}}, \mathfrak{g}_{\bar{0}}])^*$. Other part
of $(\mathfrak{g}_{\bar{0}})^*$, namely the one belonging to
$(\mathfrak{g}_{\bar{0}}/[\mathfrak{g}_{\bar{0}},
\mathfrak{g}_{\bar{0}}] )^*=
(\zeta(\mathfrak{g}_{\bar{0}}))^*=(\zeta(\mathfrak{g}_{\bar{0}}^+))^*+
(\zeta(\mathfrak{g}_{\bar{0}}^-))^*$,  may be  put to be equal to
 a constant, i.e.\ correct reduction is: $ L^{(0)}= K_++
K_-$, where $K_{\pm}$ is a~constant element of
$\zeta(\mathfrak{g}_{\bar{0}}^{\pm})^*$.
\end{proof}

Hence, in this case we have the following form of simplest $U$-$V$
pair  (\ref{uvpts}) for the zero-curvature condition:
\begin{equation}\label{uvptsr}
\nabla I_2^{\pm 2}(u)= u^{\pm 2}K_{\pm} +
  u^{\pm 1} L^{(\pm 1)} + L^{(\pm 2)}_{\pm},
\end{equation}
where $L^{(\pm 2)}_{\pm}=P_0^{\pm}(L^{(\pm 2)})\in
\mathfrak{g}_{\bar{0}}^{\pm}$. The corresponding equations
(\ref{gtes1}), (\ref{gtes2}) are satisf\/ied automatically and the
rest of equations  of the system (\ref{gtes}) are:
\begin{subequations}\label{gtesr}
\begin{gather}\label{gtesr3}
\partial_{x_-} L^{(1)}=[L_-^{(-2)}, L^{(1)}] + [L^{(-1)}, K_+],
\\
\label{gtesr4}
\partial_{x_+} L^{(-1)}=[ L_+^{(2)}, L^{(-1)}]
+[L^{(1)},K_-],
\\
\label{gtesr5}
\partial_{x_+} L_-^{(-2)}- \partial_{x_-} L_+^{(2)}=
 [L^{(1)}, L^{(-1)}],
\end{gather}
\end{subequations}
where we have again used that in our case $[r_{+}(X), r_{-}
(Y)]=0$, $\forall\, X,Y \in \mathfrak{g}_{\bar{0}}$.

The equations (\ref{gtesr3}), (\ref{gtesr4}) are {\it the simplest
possible generalizations of the Thirring equations}. The f\/irst
term in the right-hand-side of this equation is an analog of the
cubic non-linearity of Thirring equation. The second term is an
analog of linear ``massive'' term of the Thirring equations.  In
all the cases one can express $L_{\pm}^{(\pm 2)}$ via  $L^{(\pm
1)}$, and equation (\ref{gtesr5})  will  follow from  equations
(\ref{gtesr3}), (\ref{gtesr4}). In order to show this we will
consider the following example.

\begin{example} Let us consider the case $\mathfrak{g}=gl(n)$, with
the following $Z_2$-grading: $\mathfrak{g}=gl(n)_{\bar{0}} +
gl(n)_{\bar{1}}$, where $gl(n)_{\bar{0}}=gl(p)+ gl(q)$ and
$gl(n)_{\bar{1}}=\mathbb{C}^{2pq}$, i.e. {\samepage
\[
 gl(n)_{\bar{0}}=
\left(
\begin{array}{cc}
  \hat{\alpha} & 0 \\
  0 & \hat{\delta}
\end{array}
\right), \qquad gl(n)_{\bar{1}}=\left(
\begin{array}{cc}
  0 & \hat{\beta} \\
 \hat{\gamma} & 0
\end{array}
\right),
\] where $\hat{\alpha}\in gl(p)$, $\hat{\delta} \in gl(q)$,
$\hat{\beta} \in {\rm Mat}(p,q)$, $\hat{\gamma} \in {\rm
Mat}(q,p)$.}

In this case $\mathfrak{g}_{\bar{0}}^+=gl(p)$,
$\mathfrak{g}_{\bar{0}}^-=gl(q)$,
$\zeta(\mathfrak{g}_{\bar{0}}^+)=k_+1_p$,
$\zeta(\mathfrak{g}_{\bar{0}}^-)=k_-1_q$.  The corresponding
$U$-$V$ pair  (\ref{uvptsr}) has the form:
\begin{gather*}
\nabla I_2^{+ 2}(u)= k_+u^{2} \left(%
\begin{array}{cc}
  1_p & 0 \\
  0 & 0
\end{array}%
\right) +
  u \left(%
\begin{array}{cc}
  0 & \hat{\beta}_+ \\
 \hat{\gamma}_+ & 0
\end{array}%
\right) + \left(%
\begin{array}{cc}
  \hat{\alpha}_+ & 0 \\
  0 & 0
\end{array}%
\right),
\\
\nabla I_2^{- 2}(u)= k_-u^{- 2}\left(%
\begin{array}{cc}
 0 & 0 \\
  0 & 1_q
\end{array}%
\right) +
  u^{- 1} \left(%
\begin{array}{cc}
  0 & \hat{\beta}_- \\
 \hat{\gamma}_- & 0
\end{array}%
\right) + \left(%
\begin{array}{cc}
  0 & 0 \\
  0 & \hat{\delta}_-
\end{array}%
\right).
\end{gather*}
The corresponding  zero-curvature condition yields in this case
the following equations:
\begin{subequations}
\begin{gather} \label{vt1}
\partial_{x_-} \hat{\beta}_+=-(\hat{\beta}_+\hat{\delta}_- +k_+\hat{\beta}_-),\qquad
\partial_{x_-} \hat{\gamma}_+=(\hat{\delta}_-\hat{\gamma}_+ +k_+\hat{\gamma}_-),
\\ \label{vt2}
\partial_{x_+} \hat{\beta}_-=(\hat{\alpha}_+\hat{\beta}_- +k_-\hat{\beta}_+),\qquad
\partial_{x_+} \hat{\gamma}_-=-(\hat{\gamma}_-\hat{\alpha}_+ +k_-\hat{\gamma}_+),
\\
\label{vt3}
\partial_{x_+} \hat{\delta}_-=(\hat{\gamma}_+\hat{\beta}_- -\hat{\gamma}_-\hat{\beta}_+),\qquad
\partial_{x_-} \hat{\alpha}_+=(\hat{\beta}_-\hat{\gamma}_+ - \hat{\gamma}_-\hat{\beta}_+).
\end{gather}
\end{subequations}
By direct verif\/ication it is easy to show that the substitution
of variables:
\begin{equation*}
\hat{\delta}_-=-\frac{1}{k_-} \hat{\gamma}_-\hat{\beta}_-, \qquad
\hat{\alpha}_+= -\frac{1}{k_+}\hat{\beta}_+\hat{\gamma}_+
\end{equation*}
solves equation (\ref{vt3}) and yields the following non-linear
 dif\/ferential equations:
\begin{gather*} 
\partial_{x_-} \hat{\beta}_+= \frac{1}{k_-}\hat{\beta}_+ (\hat{\gamma}_-\hat{\beta}_-) -k_+\hat{\beta}_- ,\qquad
\partial_{x_-} \hat{\gamma}_+= -\frac{1}{k_-}(\hat{\gamma}_-\hat{\beta}_-)\gamma_+  + k_+\hat{\gamma}_- ,
\\
\partial_{x_+} \hat{\beta}_-= -\frac{1}{k_+}(\hat{\beta}_+\hat{\gamma}_+)\hat{\beta}_- + k_-\hat{\beta}_+ ,\qquad
\partial_{x_+} \hat{\gamma}_-= \frac{1}{k_+}\hat{\gamma}_-(\hat{\beta}_+\hat{\gamma}_+) - k_-\hat{\gamma}_+ .
\end{gather*}
These equations are {\it matrix generalization of the complex
Thirring system} \cite{TW}. They admit several  reductions
(restrictions to the dif\/ferent real form of the algebra
$gl(n)$). For the case of an $u(n)$ reduction we have
\[
\hat{\gamma}_{\pm}=- \hat{\beta}^{\dag}_{\pm}= -
\Psi^{\dag}_{\pm}, \qquad k_{\pm}=i\kappa_{\pm},\qquad
\kappa_{\pm} \in \mathbb{R}
\] and we
obtain the following non-linear matrix equations in partial
derivatives:
\begin{gather*}
i\partial_{x_-} \Psi_+= \frac{1}{\kappa_-}\Psi_+
(\Psi^{\dag}_{-}\Psi_{-}) +\kappa_+\Psi_{-} ,
\qquad 
i\partial_{x_+} \Psi_-=-
\frac{1}{\kappa_+}(\Psi_{+}\Psi^{\dag}_{+})\Psi_-+ \kappa_-\Psi_+.
\end{gather*}
These equations are {\it a matrix generalization of the massive
Thirring equations} ( $\Psi_{\pm} \in {\rm Mat}(p,q)$). In the
case of
 $p=n-1$, $q=1$ or $p=1$, $q=n-1$ we obtain {\it a
vector generalization of Thirring equations}. In the special case
$n=2$, $p=q=1$ and $\kappa_-= \kappa_+$ we recover the usual
scalar massive Thirring equations with mass $m=(\kappa_+)^2$.
\end{example}

\subsubsection{Case of Coxeter automorphism}

Let $\sigma$ be a  Coxeter automorphism. In this case $p=h$, where
$h$ is a Coxeter number of $\mathfrak{g}$ and algebra
$\mathfrak{g}_{\overline{0}}$ is Abelian. That is why all tensors
$R_0$
 are solutions of the mYBE on
$\mathfrak{g}_{\overline{0}}$ and maps~$r_{\pm}$ are arbitrary
(modulo the constraint $r_{+}+ r_{-}=1$). Moreover, in this case
 the following proposition holds:

\begin{proposition}
Let $\sigma$ be a Coxeter automorphism of $\mathfrak{g}$ and maps
$r_{\pm}$ on $\mathfrak{g}$ be nondegenerated. In this case
$L^{(0)}$ is constant along all time flows and  components of
$L^{(\pm h)}$ are expressed  as polynomials of  components of
$L^{(\pm 1)},\dots, L^{(\pm (h-1))}$.
\end{proposition}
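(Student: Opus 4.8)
The plan is to treat the two assertions separately, in both cases projecting the Lax flows of the hierarchy onto the zero-graded component $\mathfrak{g}_{\bar 0}$ and exploiting the fact (recalled just before the statement) that $\mathfrak{g}_{\bar 0}$ is \emph{Abelian} for a Coxeter automorphism.

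First I would show that $L^{(0)}$ is a constant of motion along every flow. Each flow is a Lax equation $\tfrac{dL}{dt}=[\nabla I_m^{\pm hk},L]$ of the form \eqref{glax} (coadjoint identified with adjoint). Generalizing \eqref{uvpt}, the gradient $\nabla I_m^{+hk}$ is the strictly-positive-power part of $u^{hk}\nabla I_m(L_+(u))$ together with its $u^0$-part dressed by $r_+$; writing $\nabla I_m(L_+(u))=\sum_i B_i u^i$ with $B_i\in\mathfrak{g}_{\bar i}$, it equals $\sum_{l\ge 0}A_l u^l$ with $A_0=r_+(B_{-hk})$ and $A_l=B_{l-hk}$ for $l\ge 1$. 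Projecting $\tfrac{dL}{dt}=[\nabla I_m^{+hk},L]$ onto $u^0$ gives $\tfrac{dL^{(0)}}{dt}=[A_0,L^{(0)}]+\sum_{l\ge 1}[A_l,L^{(l)}]$. The first bracket is a bracket of two elements of $\mathfrak{g}_{\bar 0}$, hence zero. For the sum I would invoke the invariance identity $[\nabla I_m(L_+(u)),L_+(u)]\equiv 0$ and read off its $u^{-hk}$ coefficient, namely $[B_{-hk},r^*_+(L^{(0)})]+\sum_{l\ge 1}[B_{l-hk},L^{(l)}]=0$; here $B_{-hk}$ and $r^*_+(L^{(0)})$ both lie in $\mathfrak{g}_{\bar 0}$, so the boundary term drops and the remaining sum is precisely $\sum_{l\ge 1}[A_l,L^{(l)}]$. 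Thus $\tfrac{dL^{(0)}}{dt}=0$ along the $I_m^{+hk}$ flow, and by the mirror computation along the $I_m^{-hk}$ flows. Since $r_++r_-=1$, the element $L^{(0)}=(r_++r_-)(L^{(0)})$ is then constant along all flows. (For the two simplest flows this is just \eqref{gte1}--\eqref{gte2}, whose right-hand sides are $\mathfrak{g}_{\bar 0}$-brackets.)

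Second, to express $L^{(\pm h)}$ I would use the conserved integrals $I_{d_i}^{\pm h}$, $i=1,\dots,r=\mathrm{rank}\,\mathfrak{g}$, coming from the fundamental invariants $I_{d_i}$ of $\mathfrak{g}$. Because $L^{(\pm h)}\in\mathfrak{g}_{\overline{\mp h}}=\mathfrak{g}_{\bar 0}$, and any other contribution to the $u^{\mp h}$ coefficient of $I_{d_i}(L_\pm(u))$ that contains $L^{(\pm h)}$ would force all remaining factors to carry $u^0$, i.e.\ to equal $r^*_\pm(L^{(0)})$, the coefficient of $u^{\mp h}$ is affine-linear in $L^{(\pm h)}$: one gets $I_{d_i}^{\pm h}=\bigl(\nabla I_{d_i}(r^*_\pm(L^{(0)})),\,L^{(\pm h)}\bigr)+Q^{\pm}_{d_i}\bigl(L^{(\pm 1)},\dots,L^{(\pm(h-1))}\bigr)$, with $Q^{\pm}_{d_i}$ polynomial in the listed components. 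This is a linear system for $L^{(\pm h)}\in\mathfrak{g}_{\bar 0}$ whose coefficient matrix has rows $\nabla I_{d_i}(r^*_\pm(L^{(0)}))$, while $L^{(0)}$ is now a fixed constant and the $I_{d_i}^{\pm h}$ are fixed conserved values.

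The main obstacle is the invertibility of this coefficient matrix, and it is exactly here that the nondegeneracy hypothesis on $r_\pm$ is used: for a generic (regular) constant value of $L^{(0)}$ it guarantees that $r^*_\pm(L^{(0)})$ is a \emph{regular} element of the Cartan subalgebra $\mathfrak{g}_{\bar 0}$. At a regular point $H$ one has $\nabla I_{d_i}(H)\in\mathfrak{g}_{\bar 0}$ (the gradient centralizes $H$) and the $r$ gradients $\nabla I_{d_1}(H),\dots,\nabla I_{d_r}(H)$ are linearly independent, i.e.\ span $\mathfrak{g}_{\bar 0}$ -- the standard invariant-theoretic fact that the differentials of the fundamental invariants are independent away from the root hyperplanes. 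Inverting the resulting constant matrix then solves the system and yields $L^{(\pm h)}$ as a polynomial in $L^{(\pm 1)},\dots,L^{(\pm(h-1))}$, with coefficients built from the fixed $L^{(0)}$ and the conserved constants $I_{d_i}^{\pm h}$, which is the assertion.
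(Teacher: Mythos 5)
Your argument is correct, but the first half takes a genuinely different route from the paper. For the constancy of $L^{(0)}$ the paper does not project the Lax flows at all: it observes that the $2\,\mathrm{rank}\,\mathfrak{g}$ quantities $I_k^{\pm 0}=I_k\bigl(r^*_{\pm}(L^{(0)})\bigr)$ belong to the commuting family of Theorem~\ref{mainteo}, hence are conserved, and that since $\dim\mathfrak{g}_{\bar 0}=\mathrm{rank}\,\mathfrak{g}$ the components of $L^{(0)}$ can be functionally recovered from them, so they are constants too. Your alternative --- reading off the $u^{0}$ coefficient of each Lax equation, killing the boundary bracket by commutativity of $\mathfrak{g}_{\bar 0}$, and disposing of the remaining sum via the $u^{-hk}$ coefficient of the ad-invariance identity $[\nabla I_m(L_\pm(u)),L_\pm(u)]=0$ --- is more computational but also more self-contained: it does not rely on local invertibility of the Chevalley map $H\mapsto(I_1(H),\dots,I_r(H))$ on the Cartan, nor (for this half) on nondegeneracy of $r_\pm$, whereas the paper's route implicitly needs both. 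For the second half you follow essentially the paper's proof (conservation of $I_{d_i}^{\pm h}$, affine-linearity in $L^{(\pm h)}$, inversion of the resulting linear system), and in fact you supply a detail the paper only gestures at, namely that invertibility comes from the linear independence of $\nabla I_{d_1}(H),\dots,\nabla I_{d_r}(H)$ at a regular $H=r^*_\pm(L^{(0)})$; note that, exactly as in the paper (cf.\ the denominators $(c_i-c_j)(c_i-c_k)$ in \eqref{expr}), this requires regularity of the constant $L^{(0)}$ and not merely nondegeneracy of $r_\pm$, a genericity assumption both you and the author leave tacit.
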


\begin{proof}
 In the considered case we have that $\mathrm{dim}\,
\mathfrak{g}_{\overline{0}}=\mathrm{rank}\, \mathfrak{g}$. On the
other hand there exist $r=\mathrm{rank}\, \mathfrak{g}$
independent Casimir functions $I_k(L)$, $k\in \overline{1,h}$, on
$\mathfrak{g}^*$.  Hence there exist $2\, \mathrm{rank}\,
\mathfrak{g}$ integrals of the following form:
\begin{gather*}
 I_k^{\pm 0}= I_k( r_{\pm}^*( L^{(0)})).
\end{gather*}
They are constant along all time f\/lows generated by all other
integrals $I_k^{\pm l}$. That is why we may put them to be equal
to constants. On the other hand, it is not dif\/f\/icult to see
that all $r$ independent components of $L^{(0)}$ can be
functionally expressed via $I_k^{\pm 0}$. Hence, they are
constants too. At last, we have $2\, \mathrm{rank}\, \mathfrak{g}$
integrals $I_k^{\pm h}$, which are constant along all time f\/lows
and we may put them to be equal to constants $I_k^{\pm h}={\rm
const}_k^{\pm h}$. It easy to see that $I_k^{\pm h}$ are linear in
$L^{(\pm h)}$ and, on the surface of level of $I_k^{\pm h}$,  all
components of $L^{(\pm h)}$ are expressed polynomially via
components of $L^{(\pm 1)},\dots,  L^{(\pm (h-1))}$ if the maps
$r_{\pm}$ are non-degenerate.
\end{proof}
This proposition has the following important corollary:
\begin{corollary}
The number of independent fields in the generalized Thirring
equation \eqref{zce2t}, corresponding to Coxeter automorphism, is
equal to  $2 \sum\limits_{j=1}^{h-1} \dim \mathfrak{g}_j =2 (\dim
\mathfrak{g}- \mathrm{rank}\, \mathfrak{g})$.
\end{corollary}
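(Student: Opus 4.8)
The plan is to identify the genuinely dynamical fields entering the $U$-$V$ pair (uvpt) with $p=h$, and then simply count their dimensions using the grading. The components that a priori appear in $\nabla I_2^{\pm h}(u)$ are $L^{(0)}$, the intermediate components $L^{(\pm 1)},\dots,L^{(\pm(h-1))}$, and the top components $L^{(\pm h)}$. The reduction just proved in the preceding Proposition removes two of these classes: since $\sigma$ is Coxeter and the maps $r_{\pm}$ are nondegenerate, $L^{(0)}$ is constant along all flows (so it is a fixed parameter, not a field), and the components of $L^{(\pm h)}$ are expressed polynomially through the components of $L^{(\pm 1)},\dots,L^{(\pm(h-1))}$ (so they carry no independent degrees of freedom). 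Hence the independent fields are exactly the components of $L^{(\pm 1)},\dots,L^{(\pm(h-1))}$.

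The counting step proceeds from the grading assignment $L^{(-j)}\in\mathfrak{g}_{\bar j}$, equivalently $L^{(j)}\in\mathfrak{g}_{\overline{-j}}$. Thus each block $L^{(j)}$ contributes $\dim\mathfrak{g}_{\overline{-j}}$ field components and each block $L^{(-j)}$ contributes $\dim\mathfrak{g}_{\bar j}$. As $j$ runs over $1,\dots,h-1$, the residue $\overline{-j}=\overline{h-j}$ runs over precisely the same nonzero classes modulo $h$ as $\bar j$, so both the positive and the negative families sum to $\sum_{j=1}^{h-1}\dim\mathfrak{g}_{\bar j}$, giving a total of $2\sum_{j=1}^{h-1}\dim\mathfrak{g}_{\bar j}$. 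Finally, since a Coxeter automorphism has $\mathfrak{g}_{\bar 0}$ Abelian with $\dim\mathfrak{g}_{\bar 0}=\mathrm{rank}\,\mathfrak{g}$, the decomposition $\dim\mathfrak{g}=\sum_{j=0}^{h-1}\dim\mathfrak{g}_{\bar j}$ yields $\sum_{j=1}^{h-1}\dim\mathfrak{g}_{\bar j}=\dim\mathfrak{g}-\mathrm{rank}\,\mathfrak{g}$, which is the claimed formula.

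The only point requiring care is the bookkeeping that no further dependencies are secretly imposed on the intermediate components $L^{(\pm 1)},\dots,L^{(\pm(h-1))}$ by the reductions used in the Proposition. Concretely, one must check that fixing the level sets of $I_k^{\pm 0}$ and $I_k^{\pm h}$ constrains only $L^{(0)}$ and $L^{(\pm h)}$, leaving the intermediate blocks free. This is exactly what the Proposition provides: the $2\,\mathrm{rank}\,\mathfrak{g}$ functions $I_k^{\pm 0}$ determine $L^{(0)}$ functionally, while the $I_k^{\pm h}$ are \emph{linear} in $L^{(\pm h)}$, so on their common level set the top components are solved for in terms of the intermediate ones without touching the latter. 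I therefore expect no substantive obstacle here, the content of the corollary being a dimension count once the reduction of the Proposition is in hand.
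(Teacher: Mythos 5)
Your argument is correct and is essentially the same as the paper's (implicit) one: the corollary follows from the preceding Proposition by observing that only $L^{(\pm 1)},\dots,L^{(\pm(h-1))}$ remain dynamical, and counting their components via the grading together with $\dim\mathfrak{g}_{\bar 0}=\mathrm{rank}\,\mathfrak{g}$ for a Coxeter automorphism. Your extra care about the residue classes $\overline{-j}$ versus $\bar j$ and about the level-set constraints touching only $L^{(0)}$ and $L^{(\pm h)}$ is exactly the bookkeeping the paper leaves tacit.
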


Let us explicitly consider Thirring-type equations (\ref{gte}) in
the case of the Coxeter automorphisms. In this case
$\mathfrak{g}_{\overline{0}}\simeq  \mathfrak{g}_{\overline{h}}$
is Abelian and equations (\ref{gte1}), (\ref{gte2}) become
trivial. Moreover, due to the fact that $L^{(\pm h)}$ is expressed
via $L^{(\pm j)}$ where $j<h$,  equation (\ref{gte4}) becomes a
consequence of  equations (\ref{gte3}), (\ref{gte4}). In the
resulting system of equations (\ref{gte}) is simplif\/ied to the
following  system:
\begin{gather}
\partial_{x_-} L^{(h-j)}=[ r_-(L^{(-h)}), L^{(h-j)})] +
\sum\limits_{i=j+1}^{h-1} [ L^{(i-j-h)}, L^{(h-i)}] +[L^{(-j)},
r^*_+(L^{(0)})],
\nonumber\\
\partial_{x_+} L^{(j-h)}=[ r_+(L^{(h)}), L^{(j-h)})] +
\sum\limits_{i=j+1}^{h-1} [L^{(h+j-i)}, L^{(j-h)}]
+[L^{(j)},r^*_-(L^{(0)})],\label{gtec}
\end{gather}
where $j\in \{1,h-1\}$, $L^{(0)}$ is a constant matrix and
$L^{(\pm h)}$ is polynomial in  $L^{(\pm j)}$. We call this system
of equations {\it the Abelian generalized  Thirring equations}.
The last summand in  equations~(\ref{gtec}) is an analog of the
``massive term'' in the Thirring equation. The f\/irst summand is
an analog of the cubic non-linearity in the Thirring equation, but
with the growth of the rank of $\mathfrak{g}$ the degree of this
term is also growing. The other terms are of the second order in
dynamical variables. They are absent in the ordinary Thirring
system corresponding to the case
$\widetilde{\mathfrak{g}}=\widetilde{sl(2)}$. Let us explicitly
consider the simplest   example, which already possesses all
features of the generalized Thirring equation:

\begin{example}\label{ex5} Let $\mathfrak{g}=gl(3)$. Its $Z_3$-grading
corresponding to the  Coxeter automorphism has the following form:
\begin{gather*}
gl(3)_{\bar{0}}=\left(%
\begin{array}{ccc}
  \alpha_1 & 0 & 0 \\
  0 & \alpha_2& 0 \\
  0 & 0 & \alpha_3
\end{array}%
\right), \quad  gl(3)_{\bar{1}}=\left(%
\begin{array}{ccc}
  0 & \beta_1 & 0 \\
  0 & 0 & \beta_2 \\
  \beta_3 & 0 & 0
\end{array}%
\right), \quad gl(3)_{\bar{2}}=\left(%
\begin{array}{ccc}
  0 & 0 & \gamma_3 \\
  \gamma_1 & 0 & 0 \\
  0 & \gamma_2 & 0
\end{array}%
\right),
\end{gather*} $gl(3)_{\bar{3}}=gl(3)_{\bar{0}}$,
$gl(3)_{\overline{-1}}=gl(3)_{\bar{2}}$,
$gl(3)_{\overline{-2}}=gl(3)_{\bar{1}}$.

The Lax operators belonging to the dual spaces to
$\widetilde{gl(3)}_{R_{\pm}}$ are:{\samepage
\begin{equation*}
L^{\pm}(u)= r^*_{\pm}( L^{(0)}) + L^{(\pm 1)} u^{\mp 1} + L^{(\pm
2)} u^{\mp 2} +  L^{(\pm 3)} u^{\mp 3}+\cdots,
\end{equation*}
where $L^{(0)}, L^{(\pm 3)} \in gl(3)_{\bar{0}}$, $L^{(1)},
L^{(-2)} \in gl(3)_{\bar{2}}$, $L^{(2)}, L^{(-1)} \in
gl(3)_{\bar{1}}$.}

 In order to simplify the form of the resulting soliton equations we
 will use the following notations:
\begin{gather*}
L^{(0)}=\left(%
\begin{array}{ccc}
  \alpha_1 & 0 & 0 \\
  0 & \alpha_2& 0 \\
  0 & 0 & \alpha_3
\end{array}%
\right), \quad  L^{(2)}=\left(%
\begin{array}{ccc}
  0 & \gamma^+_1 & 0 \\
  0 & 0 & \gamma^+_2 \\
  \gamma^+_3 & 0 & 0
\end{array}%
\right),\quad  L^{(1)}=\left(%
\begin{array}{ccc}
  0 & 0 & \beta^+_3 \\
  \beta^+_1 & 0 & 0 \\
  0 & \beta^+_2 & 0
\end{array}%
\right),\\
 L^{(\pm 3)}=\left(%
\begin{array}{ccc}
  \delta^{\pm }_1 & 0 & 0 \\
  0 & \delta^{\pm}_2& 0 \\
  0 & 0 & \delta^{\pm}_3
\end{array}%
\right), \quad  L^{(-1)}=\left(%
\begin{array}{ccc}
  0 & \beta^-_1 & 0 \\
  0 & 0 & \beta^-_2 \\
  \beta^-_3 & 0 & 0
\end{array}%
\right),\quad L^{(-2)}=\left(%
\begin{array}{ccc}
  0 & 0 & \gamma^-_3 \\
  \gamma^-_1 & 0 & 0 \\
  0 & \gamma^-_2 & 0
\end{array}%
\right).
\end{gather*}

 The
generating functions of the Poisson-commuting integrals of the
corresponding integrable system are:
\[ I^{k}(L^{\pm }(u))=\frac{1}{k} \,{\rm tr}\,
(L^{\pm}(u))^k=\sum\limits_{m=0}^{\infty} I_{k}^{\pm m}u^{\mp m}.
\]
In particular, we have the following integrals:
\[
I_{k}^{\pm 0}=\frac{1}{k} \,{\rm tr}\, (r^*_{\pm}(L^{(0)}))^k ,
\qquad k\in \overline{1,3}.
\]
From the fact that they are f\/ixed along all the time f\/lows we
obtain that all $\alpha_i$ are constants of motion. We also have
the following integrals:
\begin{gather*}
I_{1}^{\pm 3}= {\rm tr}\, L^{(\pm 3)},\qquad
 I_{2}^{\pm 3}={\rm tr}\, (r^*_{\pm}(L^{(0)}) L^{(\pm 3)}) + {\rm tr}\, (L^{(\pm 1)} L^{(\pm 2)}),\\
I_{3}^{\pm 3}= \frac{1}{3}\,{\rm tr}\, (L^{(\pm 1)})^{3} + {\rm
tr}\,\bigl(r^*_{\pm}(L^{(0)}) (L^{(\pm 1)} L^{(\pm 2)} + L^{(\pm
2)} L^{(\pm 1)})\bigr) + {\rm tr}\,\bigl(r^*_{\pm}(L^{(0)})^2
L^{(\pm 3)}\big).
\end{gather*}
These integrals permit us to express components of $L^{(\pm 3)}$
via components of $L^{(\pm 1)}$ and  $L^{(\pm 2)}$ in the case of
the non-degenerated maps $r_{\pm}$. Let us do this explicitly. For
the sake of  simplicity we will put that
$r^*_{\pm}(L^{(0)})=k_{\pm} L^{(0)}$, where $k_{\pm}$ are some
constants and $k_+ + k_-=1$. In this case we will have the
following explicit form of the above integrals:{\samepage
\begin{gather*}
I_{1}^{\pm 3}= \sum\limits_{i=1}^3 \delta^{\pm}_i,\qquad
I_{2}^{\pm 3}= k_{\pm}
\sum\limits_{i=1}^3 \alpha_i\delta^{\pm}_i + \sum\limits_{i=1}^3 \beta^{\pm}_i\gamma^{\pm}_i,\\
I_{3}^{\pm 3}= k^2_{\pm} \sum\limits_{i=1}^3
\alpha^2_i\delta^{\pm}_i + k_{\pm} \sum\limits_{i=1}^3
\alpha_i(\beta^{\pm}_i\gamma^{\pm}_i +
\beta^{\pm}_{i-1}\gamma^{\pm}_{i-1}) + \beta^{\pm}_1 \beta^{\pm}_2
\beta^{\pm}_3,
\end{gather*}
where we have implied  in the last summation  that
$\beta^{\pm}_{0}\equiv \beta^{\pm}_{3}, \gamma^{\pm}_{0} \equiv
\gamma^{\pm}_{3}$.}

The $U$-$V$ pair corresponding to the Hamiltonians $I_{2}^{\pm 3}$
have the form:
\begin{equation*}
\nabla I_2^{\pm 3}(u)= u^{\pm 3} r^*_{\pm}( L^{(0)}) +
  u^{\pm 2} L^{(\pm 1)} + u^{\pm 1} L^{(\pm 2)}+ r_{\pm} ( L^{(\pm 3)}).
\end{equation*}
The corresponding zero-curvature equation reads as follows:
\begin{gather*}
\partial_{x_-} L^{(1)}=[r_-(L^{(-3)}), L^{(1)}] + [L^{(-2)}, r^*_+(L^{(0)})],
\\
\partial_{x_+} L^{(-1)}=[ r_+(L^{(3)}), L^{(-1)}]
+[L^{(2)},r^*_-(L^{(0)})],
\\
\partial_{x_-} L^{(2)}=[r_-(L^{(-3)}), L^{(2)}] + [L^{(-1)}, r^*_+(L^{(0)})] + [L^{(1)},
 L^{(-2)}],
\\
\partial_{x_+} L^{(-2)}=[ r_+(L^{(3)}), L^{(-2)}]
+[L^{(1)},r^*_-(L^{(0)})] +  [L^{(2)},
 L^{(-1)}].
\end{gather*}
In the component form  we  have the following equations:
\begin{gather}
\partial_{x_-} \beta_i^+ =k_- (\delta^{-}_{i}- \delta^{-}_{i+1})\beta_i^+ +
k_+ (c_{i+1}- c_{i})\gamma_i^-,
\nonumber\\
\partial_{x_+} \beta_i^- =k_+ (\delta^{+}_{i}- \delta^{+}_{i+1})\beta_i^- +
k_- (c_{i+1}- c_{i})\gamma_i^+,
\nonumber\\
\partial_{x_-} \gamma_i^+ =k_- (\delta^{-}_{i+1}- \delta^{-}_{i})\gamma_i^+ +
k_+ (c_{i}- c_{i+1})\beta_i^- +( \beta_k^ + \gamma_j^ - - \beta_j^
+ \gamma_k^ -),
\nonumber\\
\partial_{x_+} \gamma_i^+ =k_+ (\delta^{-}_{i+1}- \delta^{-}_{i})\gamma_i^+ +
k_- (c_{i}- c_{i+1})\beta_i^- +(  \gamma_j^ + \beta_k^ - -
\gamma_k^ +\beta_j^ - ),\label{gtes''}
\end{gather}
where $c_i \equiv \alpha_i$, indices $i$, $j$, $k$ constitute a
cyclic permutation of the indices $1$, $2$, $3$, and it is implied
that $\delta^{\pm }_{3+1}\equiv \delta^{\pm}_{1}$, $c_{3+1}\equiv
c_{1}$.

Taking into account that all the integrals $I^{k}_{\pm 3}$ are
constants of motion and  putting their values to be equal to zero
we can  explicitly express variables $\delta^{\pm}_i$ via
$\alpha_i=c_i$, $\beta^{\pm}_i$, $\gamma^{\pm}_i$:
\begin{gather}
\delta^{\pm}_i= \dfrac{1 }{k_{\pm}^2(c_i-c_j)(c_i -c_k)}\nonumber\\
\phantom{\delta^{\pm}_i=}{}\times\left(k_{\pm} (c_j+c_k)
\sum\limits_{l=1}^3 \beta^{\pm}_l\gamma^{\pm}_l -  k_{\pm}
\sum\limits_{l=1}^3 c_l(\beta^{\pm}_l\gamma^{\pm}_l +
\beta^{\pm}_{l-1}\gamma^{\pm}_{l-1}) - \beta^{\pm}_1 \beta^{\pm}_2
\beta^{\pm}_3\right),\label{expr}
\end{gather}
where indices $j$ and $k$ are complementary to the index $i$ in
the set $\{1,2,3\}$. Due to the fact that $c_i$ are constants of
motion this expression is  polynomial of the third order in
dynamical variables.

At last, substituting (\ref{expr}) into the equations
(\ref{gtes''}) we obtain a system of nonlinear dif\/ferential
equations for the dynamical variables  $\beta^{\pm}_i$ and
$\gamma^{\pm}_i$. These equations are a $gl(3)$ generalization of
``complex'' Thirring equations. Unfortunately, neither these
equations nor  more complicated~$gl(n)$ complex Thirring equations
do not admit reductions to the real forms $u(3)$ or $u(n)$
respectively.
\end{example}

\subsection[The case of Kostant-Adler-Symes $R$-operators]{The case of Kostant--Adler--Symes $\boldsymbol{R}$-operators}
Let us now consider integrable hierarchies corresponding to the
case $r_+=1$, $r_-=0$ (or $r_-=0$, $r_-=1$), i.e.\ corresponding
to the Kostant--Adler--Symes decomposition of t loop algebras.

As it was reminded in  Subsection~\ref{loop}, each of the gradings
of the loop algebras corresponding to the dif\/ferent
automorphisms $\sigma$ of order $p$ provides  its own
Kostant--Adler--Symes decomposition:
$\widetilde{\mathfrak{g}}^{\sigma}=\widetilde{\mathfrak{g}}^{\sigma+}+
\widetilde{\mathfrak{g}}^{\sigma-}$, and, hence, provides
commuting Hamiltonian f\/lows and hierarchies of integrable
equations.

 Let us
consider the simplest equations of integrable hierarchies
corresponding to  them. We have the following generic elements of
the dual spaces $(\widetilde{\mathfrak{g}}^{\sigma \pm})^*$:
\[
L_{+}(u)= L^{(0)} + \sum\limits_{j=1}^{ \infty } L^{(j)}
u^{-j},\qquad L_{-}(u)=  \sum\limits_{j=1}^{ \infty } L^{(-j)}
u^{j},
\]
where $L^{(-k)}\in \mathfrak{g}_{\bar{k}}$, and the following
generating functions of the commutative integrals:
\begin{equation*}
I_k(L^+(u))= \sum\limits_{m=0}^{\infty} I^m_k u^{-m},\qquad
I_l(L^-(u))= \sum\limits_{n=1}^{\infty} I^{-n}_l u^{n},
\end{equation*}
where $I_k(L)$ are Casimir functions of $\mathfrak{g}$.

We will be interested in the following Hamiltonians and their
Hamiltonian f\/lows:
\begin{gather*}
  I_2^{p}= \bigl(L^{(0)}, L^{(p)} \bigr) +
   \frac{1}{2} \sum\limits_{j=1}^{p-1}
\bigl(  L^{(j)}, L^{(p-j)} \bigr),\qquad
 I_p^{-p}=I_p( L^{(-1)}).
\end{gather*}
The algebra-valued gradients of functions $I_2^{p}$, $I_p^{-p}$
read as follows:{\samepage
\begin{equation*}
\nabla I_2^{p}(u)= u^{p}  L^{(0)} +
  \sum\limits_{j=1}^{p-1} u^{(p-j)}
  L^{(j)} +  L^{( p)},\qquad
\nabla I_p^{-p}(u)= u^{-1} \tilde{L}^{(1)},
\end{equation*}
where $\tilde{L}^{(1)}\equiv \sum\limits_{i=1}^{ \dim
\mathfrak{g}_{\bar{1}}}\frac{\partial I_p^{-p} }{\partial
L^{(1)}_{\alpha} } X^{\overline{-1}}_{\alpha} \in
\mathfrak{g}_{\overline{-1}} $.}

The  corresponding zero-curvature condition yields the following
system of  equations:
\begin{gather}\label{tte}
\partial_{x_-} L^{(0)}= 0,\\ \label{seq}
\partial_{x_-} L^{(k)}= [\tilde{L}^{(1)},L^{(k-1)}], \qquad k \in \{1,p\},\\  \label{teq}
\partial_{x_+} \tilde{L}^{(1)}= [L^{(p)}, \tilde{L}^{(1)}].
\end{gather}

\begin{remark} The gradient $\nabla I_2^{p}(u)$ is an analog of
the $U$ operator of the generalized Thirring hierarchy. The other
gradient $\nabla I_p^{-p}(u)$ is an analog of the $V$ operator of
the non-Abelian Toda equation. Hence the corresponding integrable
equations  may be considered as an intermediate case between
generalized Thirring and (non-Abelian) Toda equation.
\end{remark}

\subsubsection{Case of  second order automorphism}

Let us consider the case of the automorphism of the second order
$\sigma^2=1$ ($p=2$), the correspon\-ding Hamiltonians, their
matrix gradients and zero-curvature conditions. We will use
commuting second order Hamiltonians of  the following form:
\begin{gather*}
  I_2^{2}= \bigl(L^{(0)}, L^{(2)} \bigr) +
\tfrac{1}{2}\bigl(  L^{(1)}, L^{(1)} \bigr),\qquad
 I_2^{-2}= \tfrac{1}{2} (L^{(-1)}, L^{(-1)}).
\end{gather*}
The algebra-valued gradients of functions $I_2^{2}$, $I_2^{-2}$
read as follows:
\begin{equation*}
\nabla I_2^{2}(u)= u^{2}  L^{(0)} +u L^{(1)} +  L^{( 2)},\qquad
\nabla I_p^{-p}(u)= u^{-1} {L}^{(-1)}.
\end{equation*}
The corresponding  zero-curvature condition yields the following
simple system of dif\/ferential equations of hyperbolic type:
\begin{gather}
\partial_{x_-} L^{(0)}= 0,\qquad
\partial_{x_-} L^{(1)}= [{L}^{(-1)},L^{(0)}],\qquad
\partial_{x_-} L^{(2)}= [{L}^{(-1)},L^{(1)}],\nonumber\\
\partial_{x_+} {L}^{(-1)}= [L^{(2)},{L}^{(-1)}],\label{soa}
\end{gather}
where $L^{(0)}, L^{(2)} \in \mathfrak{g}_{\bar{0}}$, $L^{(\pm
1)}\in \mathfrak{g}_{\bar{1}}$.

Let us consider the following example of these equations:

\begin{example} Let $\mathfrak{g}=sl(2)$ and $\sigma$ be a Cartan
involution, i.e.\ $sl(2)_{\bar{0}}={\rm diag}\,(\alpha,-\alpha)$
and $sl(2)_{\bar{1}}$ consists of the matrices with  zeros on the
diagonal. The simplest  Hamiltonians obtained in the framework of
Kostant--Adler--Symes scheme are:
\begin{gather*}
 I_2^{ 0}= (\alpha^{(0)})^2,\qquad
 I_2^{ 2}= 2\alpha^{(0)}\alpha^{( 2)} + \beta^{( 1)}\gamma^{( 1)},\qquad
 I_2^{2}=  \beta^{(-1)}\gamma^{(-1)}.
\end{gather*}
They produce the following $U$-$V$ pair for zero-curvature
equations:
\begin{gather*}
 \nabla I_2^{2}= \left(%
\begin{array}{cc}
  \alpha^{(2)} & 0 \\
  0 & -\alpha^{( 2)}
\end{array}%
\right) + u\left(%
\begin{array}{cc}
   0 & \beta^{(1)} \\
  \gamma^{(1)} & 0
\end{array}%
\right)+u^{2}\left(%
\begin{array}{cc}
  \alpha^{(0)} & 0 \\
  0 & -\alpha^{(0)}
\end{array}%
\right),\\
\nabla I_2^{-2}=  u^{-1}\left(%
\begin{array}{cc}
   0 & \beta^{(-1)} \\
  \gamma^{(-1)} & 0
\end{array}%
\right).
\end{gather*}
Due to the fact that $I_2^{0}$ are constants of motion we have
that $\alpha^{(0)}$ is also a constant of motion. Moreover, using
the fact that $I_2^{2}$ is a constant of motion one can express
$\alpha^{(2)}$ via~$\beta^{(1)}$,~$\gamma^{(1)}$. Hence we obtain
from the equations (\ref{soa}) the following independent equations
for the variab\-les~$\beta^{(\pm 1)}$,~$\gamma^{(\pm 1)}$:
\begin{subequations}\label{comthir'}
\begin{gather}\label{comthir1'}
\partial_{x_-}\gamma^{(1)}=2\alpha^{(0)}\gamma^{(-1)} ,
\\
\label{comthir2'}
\partial_{x_-}\beta^{(1)}=-2\alpha^{(0)}\beta^{(-1)} ,
\\
\partial_{x_+}\gamma^{(-1)}=-2\alpha^{(2)} \gamma^{(-1)},
\\
\label{comthir4'}
\partial_{x_+}\beta^{(-1)}= 2\alpha^{(2)} \beta^{(-1)}.
\end{gather}
\end{subequations}
From the equation (\ref{comthir2'})  it follows that
$\beta^{(-1)}=-\frac{\partial_{x_-}\beta^{(1)}}{2\alpha^{(0)}}$.
Substituting this into equation (\ref{comthir4'}) and taking into
account that $\alpha^{(0)}$  is a constant along all time f\/lows
we obtain the following equation:
\begin{equation}\label{prom}
\partial^2_{x_+x_-}\beta^{(1)}= 2\alpha^{(2)} \partial_{x_-}\beta^{(1)}.
\end{equation}
Taking into account that $\alpha^{( 2)}=-\frac{1}{2\alpha^{(0)}}
(\beta^{( 1)}\gamma^{( 1)})$, where we have put that $I^2_2=0$,
and making the reduction to the Lie algebra $su(2)$:
$\alpha^{(0)}=ic$, $\gamma^{(1)}= -\bar{\beta}^{( 1)}\equiv
-\bar{\psi} $ we obtain the following integrable equation in
partial derivatives:
\begin{equation*}
\partial^2_{x_+x_-}\psi+ \frac{i}{c} |\psi|^2\partial_{x_-}\psi=0.
\end{equation*}
This equation is (in a some sense) intermediate between Thirring
 and sine-Gordon equations.
\end{example}

\subsubsection{Case of Coxeter automorphism (principal gradation)}

Let us  consider again the case of the principal gradation. In
this case $p=h$ and a subalgebra~$\mathfrak{g}_{\bar{0}}$ is
Abelian. In the same way as it was done in the case of the
``generalized Abelian Thirring models''  it is possible to show
that~$L^{(0)}$ is constant along all time f\/lows and components
of~$L^{(h)}$ are expressed polynomially via the components
of~$L^{(k)}$, $k<h$. Let us assume that constants of motion
$L^{(0)}$  are such that the operator ${\rm ad}_{L^{(0)}}$ is
nondegenerate. In such a case we may solve the f\/irst of the
equations~(\ref{seq}) in the following way:
\begin{equation*}
\tilde{L}^{(1)}=-{\rm ad}^{-1}_{L^{(0)}}(\partial_{x_-} L^{(1)}).
\end{equation*}
Substituting this expression into  equation (\ref{teq}) and taking
into account commutativity of $\mathfrak{g}_{\bar{0}}$ and, hence,
operators ${\rm ad}^{-1}_{L^{(0)}}$ and  ${\rm ad}_{L^{(h)}}$, we
f\/inally obtain the following matrix dif\/ferential equation in
partial derivatives:
\begin{equation*}
\partial_{x_+x_-}{L}^{(1)}=
\big[L^{(h)}({L}^{(1)},\dots,{L}^{(h-1)}),\partial_{x_-}{L}^{(1)}\big],
\end{equation*}
where ${L}^{(k)}$, $k \in \{2,h-1\}$ satisfy the following set of
ordinary dif\/ferential equations:
\begin{equation*}
\partial_{x_-} L^{(k)}= [L^{(k-1)},{\rm ad}^{-1}_{L^{(0)}}(\partial_{x_-} L^{(1)})].
\end{equation*}

Let us consider the following example.

\begin{example} Let $\mathfrak{g}=gl(3)$, $h=3$. In this case we
have the following dif\/ferential equations:
\begin{gather}\label{tte31}
\partial_{x_+x_-}{L}^{(1)}=
[L^{(3)}({L}^{(1)},{L}^{(2)}),\partial_{x_-}{L}^{(1)}],
\\
\partial_{x_-} L^{(2)}= [L^{(1)},ad^{-1}_{L^{(0)}}(\partial_{x_-}
L^{(1)})].\nonumber
\end{gather}
Let  $L^{(0)}$, $ L^{(1)}$, $L^{(2)}$, $L^{(3)}$ are parametrized
as in the  Example~\ref{ex5}, i.e.
\begin{gather*}
L^{(0)}=\left(%
\begin{array}{ccc}
  \alpha_1 & 0 & 0 \\
  0 & \alpha_2& 0 \\
  0 & 0 & \alpha_3
\end{array}%
\right),\qquad
 L^{(1)}=\left(%
\begin{array}{ccc}
  0 & 0 & \beta_3 \\
  \beta_1 & 0 & 0 \\
  0 & \beta_2 & 0
\end{array}%
\right), \\
L^{(2)}=\left(%
\begin{array}{ccc}
  0 & \gamma_1 & 0 \\
  0 & 0 & \gamma_2 \\
  \gamma_3 & 0 & 0
\end{array}%
\right),\qquad
L^{(3)}=\left(%
\begin{array}{ccc}
  \delta_1 & 0 & 0 \\
  0 & \delta_2& 0 \\
  0 & 0 & \delta_3
\end{array}%
\right).
\end{gather*}
 In such coordinates equation (\ref{tte31}) acquires the
following form:
\begin{equation} \label{tte31'}
\partial^2_{x_- x_+} \beta_i= (\delta_{i+1}(\beta,\gamma)- \delta_{i}(\beta,\gamma))\partial_{x_-}
\beta_i, \qquad i\in \{1,3\}.
\end{equation}
where, like in the Example \ref{ex5}, $\delta_{i}(\beta,\gamma)$
are expressed via $\beta_j$, $\gamma_k$ and constants
$\alpha_i\equiv c_i$:
\begin{gather}\label{expr2}
\delta_i(\beta,\gamma)= \dfrac{1 }{(c_i-c_j)(c_i -c_k)}\left(
(c_j+c_k) \sum\limits_{l=1}^3 \beta_l\gamma_l -
\sum\limits_{l=1}^3 c_l(\beta_l\gamma_l + \beta_{l-1}\gamma_{l-1})
- \beta_1 \beta_2 \beta_3\right).
\end{gather}
This equation is an exact analog of the equation (\ref{prom}).
Unfortunately in this case there is no $su(3)$ reduction and
variables $\gamma_i$ are not conjugated to $\beta_i$ but satisfy
the following dif\/ferential equations:
\begin{equation} \label{tte32'}
\partial_{x_- } \gamma_i= \epsilon_{ijk}\beta_j(c_k-c_{k+1})^{-1}\partial_{x_- }
\beta_k,  \qquad i\in \{1,3\}.
\end{equation}
Equations  (\ref{tte31'})--(\ref{tte32'}) are intermediate between
the Abelian $gl(3)$-Toda equations and the genera\-li\-zed Abelian
$gl(3)$-Thirring equations.
\end{example}

\subsection*{Acknowledgements} Author is grateful to Professors A.~Mikhailov and  P.~Holod
for attracting his attention to Thirring model and to M.~Tsuchida
for useful discussion.

\pdfbookmark[1]{References}{ref}
\LastPageEnding
\end{document}